\theoremstyle{plain}
\newtheorem{ass}{Assumption}
\newtheorem{prop}{Proposition}
\newtheorem{cor}{Corollary}
\newtheorem{theo}{Theorem}
\theoremstyle{remark}
\newtheorem{rem}{Remark}
\newtheorem{defi}{Definition}
\DeclareMathOperator*{\Forall}{\Large \forall}
\DeclareMathOperator*{\argmin}{arg\,min}
\DeclareMathOperator*{\argmax}{arg\,max}
\newcounter{deferred}
\newcommand{\deferred}[2][]{%
  \ifstrempty{#1}
    {\stepcounter{deferred}\expandafter\gdef\csname temp\arabic{deferred}\endcsname{#2}}
    {\expandafter\gdef\csname temp#1\endcsname{#2}}%
}
\newcommand{\shownow}[1]{\csname temp#1\endcsname}
\newcommand{\deriv}[1]{\frac{\partial}{\partial #1}}
\newcommand{\derivsec}[1]{\frac{\partial^2}{\partial #1^2}}
\newcommand{\tderiv}[1]{\frac{\mathrm{d}}{\mathrm{d} #1}}
\begin{document}
\begin{frontmatter}

\title{The Pond Dilemma with Heterogeneous Relative Concerns}
\runtitle{The Pond Dilemma}
\date{today}
\begin{aug}

\author[id=au1,addressref={add1}]{\fnms{Pawe\l{}}~\snm{Gola}\ead[label=e1]{pawel.gola@ed.ac.uk}}

\address[id=add1]{%
\orgdiv{School of Economics},
\orgname{University of Edinburgh}}
\end{aug}

\support{The paper was first circulated in September 2024, and this version is from October 2024. I would like to thank Miguel Ballester, Stuart Breslin, Antonio Cabrales, Ed Hopkins, Tai-Wei Hu, Emir Kamenica, Alastair Langtry, Andriy Zapechelnyuk, Yuejun Zhao, and the seminar audience in Edinburgh for helpful discussions and suggestions.  I gratefully acknowledge the support of the UKRI Horizon Europe Guarantee (Grant Ref: EP/Y028074/1). All errors are mine.}
\begin{abstract}
This paper explores team formation when workers differ in skills and their desire to out-earn co-workers. I cast this question as a two-dimensional assignment problem with imperfectly transferable utility and show that equilibrium sorting optimally trades off output maximisation with the need to match high-skill workers to co-workers with weak relative concerns. This can lead to positive (negative) assortative matching in skill even with submodular (supermodular) production functions. Under supermodular production, this heterogeneity in preferences benefits all workers and reduces wage inequality. With submodular production, the distributional consequences are ambiguous, and some workers become worse off. The model reveals that skill-biased technological change (SBTC) incentivises domestic outsourcing, as firms seek to avoid detrimental social comparisons between high- and low-skill workers, thus providing a compelling explanation for the long-term increase in outsourcing. Finally, the benefits of SBTC can trickle down to low-skill workers—but only those whose relative concerns are weak.
\end{abstract}

\begin{keyword}
\kwd{relative concerns, multidimensional sorting, theory of the firm, outsourcing, skill-biased technological change, imperfectly transferable utility}
\end{keyword}

\end{frontmatter}

\section{Introduction}

Forming productive and durable teams requires more than finding workers with compatible skills: Preference and personality compatibility matters as well. Of particular importance are the relative concerns of the team-members. Many a famous sports teams and music bands have disintegrated because multiple members felt they deserved to be the biggest fish in that particular pond. Less anecdotally, there is strong empirical evidence that humans care about their relative position within the reference group and are willing to accept lower absolute wages to improve their relative earnings.\footnote{See, for example, \cite{Luttmer2005, Card2012, Perez-Truglia2020, Bottan2020}.} Moreover, the strength of these relative concerns--and their close cousin, competitiveness---differs across individuals, affecting their career and location choices.\footnote{See \cite{Buser2014} and \cite{Bottan2020}.}

The impact that relative concerns have on team formation should be of general interest to economists. Most importantly, these concerns distort sorting, and through that economy-wide output and wages. For example, a match between a low-skill worker who cares greatly about status and a high-skill worker who cares only about their own wage may well be output-maximising. Yet, it is not mutually beneficial, because the worker with strong relative concerns would be very unhappy about their low status in such a match. Even when a specific match between a high-low skill match remains viable, the low-skill worker still needs compensation for their low relative standing, which clearly affects wage inequality. Relative concerns may even influence firm boundaries: If---as argued by \cite{Nickerson2008}---social comparisons are more salient within than across firm boundaries, firms may salvage some of such output-maximising matches by outsourcing low-skill workers, thus avoiding potentially detrimental social comparison altogether. The size of these potential distortions depends on the difference in productivity between high- and low-skill workers, as this difference determines how much lower the low-skill worker's wage and status are. As a result, the presence of relative concerns creates a novel channel through which skill-biased technological change affects sorting, inequality and outsourcing.

In this paper, I study the impact that relative concerns have on the economy by casting the problem as a one-sided sorting problem with teams of two. Each team produces output, which depends on the heterogeneous skills of the two co-workers and can be freely divided among them. Workers' utility is a weighted sum of their wage (i.e, the amount of output the worker receives) and the difference between their own and their co-worker's wage. The weight with which this wage difference enters utility differs across workers, and represents the strength of relative concerns.
The heterogeneity of relative concerns renders utility  imperfectly transferable: If a worker has weaker relative concerns than their co-worker, then any increase in the worker's own wage increases their utility by less than it decreases the co-worker's utility. Fortunately, for my choice of utility function, the model admits a transferable utility representation, that is, there exists a re-scaling of each individual's utility which is perfectly transferable. The sum of the worker's and their co-worker's re-scaled utilities will be called the \emph{surplus} of the match; by standard results, equilibrium sorting maximises aggregate surplus of the economy.

This baseline model yields three main results. First, I fully characterise equilibrium sorting, payoffs and wages for a broad class of production functions and distributions of traits.
My characterisation relies on two additional assumptions. Assumption \ref{ass: common} stipulates, effectively, that the surplus function is supermodular (only) in the worker's skill and some index of the co-worker's skill and relative concerns, which I call the \emph{skill-preference index}. Assumption \ref{ass: copula} requires that the resulting copula of skill and the skill-preference index is symmetric. Assumption \ref{ass: common} implies that the social planner would like to match workers with high skill to co-workers with high values of the skill-preference index. Assumption \ref{ass: copula} ensures that such positive and assortative matching between skill and the index is feasible, and thus prevails in equilibrium.
I show that Assumptions \ref{ass: common} and \ref{ass: copula} are satisfied for large classes of specifications, specifically if either (a) the production function is  additive and the copula of traits is exchangeable, (b) skills are binary or (c) the production function is multiplicative and the joint distribution of traits is log-elliptical.

The equilibrium sorting optimally trades-off two motives against each other. The first motive is output maximisation: By standard arguments, if high-skill workers are complements (substitutes) with each other, then output is maximised by matching workers positively (negatively) and assortatively in the skills dimension. The second, and novel, motive is the maximisation of the welfare stemming from social comparisons: As high-skill workers earn higher wages, they impose a cost on their co-workers. This cost is minimised by matching them to co-workers with weak relative concerns. In the case of additive production, in particular, aggregate output does not depend on the sorting pattern, and thus the first motive is absent. Sorting between the workers' skill and their co-worker's strength of relative concerns is then negative and assortative, as conjectured by \cite{Frank1984}. More generally, the strength of equilibrium sorting in the skill dimension depends not only on the properties of the production function as in a standard sorting model, but also on the joint distribution of traits. As a result, it is perfectly possible for high-skill workers to be complements (substitutes) with each other, and yet for equilibrium sorting to remain perfectly negative (positive) and assortative in skills.

Second, the presence of heterogeneous relative concerns affects wage inequality in comparison to the standard model. because any high-skill workers matched to a less skilled co-worker must compensate that co-worker for their lower status by paying them more than their self-match wage (i.e., the wage they would receive by matching their identical clone). Of course, the amount of this compensation depends on the differences in productivity between high- and low-skill workers---the larger the difference in productivity, the stronger the difference in wages absent any compensation, and the larger the compensation needs to be. This indicates that, in some cases, improvements in high-skill workers productivity \emph{trickle down} to those low-skill workers who are matched with them! Furthermore, it also indicates that wage inequality in the presence of relative concerns is lower than it would be if all workers self-matched. As all workers self-match in the benchmark if high-skill workers are complements with each other, it follows that the presence of relative concerns decreases wage inequality in that case.

If high-skill workers are substitutes with each other, however, then sorting is negative and assortative and workers earn more than the self-match wage in the benchmark; indeed, it is perfectly possible that the benchmark wage distribution is less unequal than the self-match wage distribution. As self-matching always obtains if high-skill workers have sufficiently weaker relative concerns than low-skill workers, this implies that the presence of relative concerns can increase wage inequality.  In other words, while the presence of relative concerns always decreases within-firm wage inequality, it may increase between-firm wage inequality to such an extent that overall inequality raises. Interestingly, the conditions under which this can happen---weak (strong) relative concerns for high- (low-) skill workers---make the binary skill case isomorphic to a model in which all agents have inequity aversion, suggesting that aversion to within-firm inequity can increase overall inequality.

Third, I derive the impact that relative concerns have on agents' welfare. To avoid interpersonal welfare comparisons, I assume that each worker's utility depends on their personal strength of relative concerns and the intensity of interactions with their co-workers; the standard model corresponds then to the case where workers have relative concerns but never interact with co-workers. If high-skill workers are complements with each other, then all workers benefit from social comparisons: this is simply because they can always obtain the benchmark payoff by self-matching, and as many of them chose not to, they must be better off. Again, however, the substitutes case yields dramatically different conclusion. In that case, workers receive a payoff higher than the self-match payoff in the benchmark;
and yet, low-skill workers with strong relative concerns and high-skill workers with weak relative concerns choose to self-match, in order to avoid detrimental social comparisons. Therefore, social comparisons are welfare-reducing for these self-matching workers.

I build on that last insight to develop a theory of the boundaries of the firm. \citep{Nickerson2008} provide convincing case studies which demonstrate that outsourcing alleviates the salience of social comparisons. Thus, if outsourcing is possible but costly, teams that are identical in terms of skills may choose to draw the boundaries of their firms differently, simply because of differences in the strength of relative concerns: High-skill workers with strong relative concerns will keep their low-skill co-workers in-house, whereas high-skill workers with weak relative concerns will outsource them. Strikingly, this simple theory of the firm  implies that  skill-biased technological change (SBTC), by increasing within-firm wage inequality, makes outsourcing more beneficial for teams in which the high-skill worker has weaker relative concerns than the low-skill workers. Furthermore, while this has no impact on team composition, it does increase the number of firms that consist of workers of the same type, and thus increases sorting as measured in the data.

The rest of the paper is structured as follows. Section \ref{sec: rellit} discusses the related literature. Section~\ref{sec: model} develops the model. Section \ref{sec: char} characterises the equilibrium. Section \ref{sec: ToF} develops a social-comparisons-based theory of the firm, and uses it to explore the consequences of skill-biased technological changes for wage inequality, sorting and outsourcing.
 Section \ref{sec: cr} concludes.  Appendix \ref{app: proofs} contains omitted proofs and derivations.
\section{Related Literature}\label{sec: rellit}
\paragraph*{Sorting with Relative Concerns} I am aware of three articles and one book that study the problem of how workers sort into teams/firms if they have heterogeneous relative concerns. The seminal work by Robert Frank \citep{Frank1984, Frank1985} posed this important problem and unearthed the fundamental insights that the presence of heterogeneous relative concerns means that within-firm wage inequality is lower than productivity inequality, and that workers with stronger relative concerns end up having less skilled co-workers. \cite{Fershtman2006} extended the problem posed by Frank to include effort provision, finding that firms consisting of workers with strong and weak relative concerns require workers with strong relative concerns to exert more effort.\footnote{There is a large literature that studies the impact of relative concerns on effort provision \citep[see, for example, ][]{Hopkins2004, Hopkins2009}; this literature, however, assumes homogeneous relative concerns and is not concerned with sorting.} \cite{Langtry2023} differs from the other work on this topic (including mine), in that wages are set exogenously in his model, thus precluding high-skill workers from compensating low-skill workers for they lower status.\footnote{It is worth noting that the bulk of \cite{Langtry2023} is concerned with the altogether different problem of optimal choice of consumption on a network, when agents care about their neighbours' consumption.}
 For that reason, relative concerns affect wage inequality through sorting only, and the trickle-down effect---critical for my results---is absent.\footnote{In that sense,  \cite{Langtry2023} is actually closer  literature focusing on workers' who have homogeneous relative concerns and choose between two occupations \citep[e.g., ][]{Fershtman1993, Mani2004, Gola2024} than it is to \cite{Frank1984, Frank1985, Fershtman2006} and the present paper. In both \cite{Langtry2023} and the occupational choice literature, the firms/occupations do not internalise the externalities caused by relative concerns, and thus all of the impact of relative concerns happens through sorting, rather than wage setting. For that reason, relative concerns affect sorting even when all workers care about status equally.}

I make a number of important contributions to this small literature. Among others, this is the first paper to (a) provide analytical expressions for sorting and wages in settings with rich skill and preference heterogeneity, (b) allow a worker's productivity to depend on their co-worker's skill, (c) study how changes to the production function affect sorting, inequality and outsourcing in the presence of heterogeneous relative concerns, and (d) consider the impact that heterogeneous relative concerns have firms' boundaries.

\cite{Cabrales2008, Cabrales2008b} consider sorting in the presence of inequity aversion rather than relative concerns, and show that inequity aversion leads to more positive and assortative sorting in skill; a result that is also implied by my Proposition \ref{prop: sorting}.\footnote{\cite{Cabrales2008} motivate their focus on inequity aversion arguing that status concerns would produce counterfactual sorting in skills in the economy. While this is true when status concerns are homogeneous, one of the insights from the current work is that when status concerns are heterogeneous, then any degree of sorting in skills can be rationalised, irrespective of the properties of the production function.} In contrast to my work, they only allow for additively separable production functions, and so the surprising result that inequality aversion may increase overall wage inequality does not occur in their settings.

\paragraph*{Multidimensional Sorting}
This paper contributes to the literature on multidimensional, frictionless sorting with (a) transferable utility (representation)
\citep{Tinbergen1956, Galichon2016, Lindenlaub2017, Gola2021}.\footnote{Utility is imperfectly transferable in my model, but there exists a transferable utility representation \citep[see ][for a general discussion of sorting under imperfectly transferable utility]{Legros2007}. This implies, similarly to \cite{Clark2024}, that the equilibrium maximises appropriately defined total surplus in the economy, but does not---in general--- maximise total output.} With the exception of \cite{Gola2021}, this literature relies on a bi-linear surplus function and a Gaussian distribution of traits to provide closed form solutions. This article is the first one to (a) allow for one of the dimensions of heterogeneity to be a social preference rather than skill; (b) provide close form solutions for trait distributions that are not Gaussian and (c) consider a one-sided sorting model. I do this by leveraging a unique property of one-sided matching: It is always isomorphic to a two-sided model with a symmetric surplus function and identical distributions of traits on each side. This property is extremely useful, because under reasonable conditions on the surplus function \citep[see, for example, Proposition 11(b) in][]{Lindenlaub2017},  equilibrium sorting in such two-sided problems involves positive and assortative matching within each of the skill/preference dimensions.
In other words, one-sided sorting problems are more amenable to the introduction of multidimensional traits than two-sided problems are, because the assumption of identical traits distribution is very easy to satisfy.\footnote{The solution method in \cite{Tinbergen1956, Galichon2016, Lindenlaub2017} also leverages this property: Because Gaussian distributions are closed under linear transformations and surplus is bi-linear, there exist transformations of the workers' traits that have the same distributions on both sides of the market.\looseness=-1}

\paragraph*{Theory-of-the-Firm} My article follows a very old tradition in economics that contributes the need for concentrating economic activity within firms to transaction costs \citep{Coase1937, Williamson1971, Klein1978}. However, as noted in \cite{Coase1937}, the transaction costs theory-of-the-firm is not sufficient to explain why economic activity does not take place in one gigantic firm. The theory presented in this paper provides a clear answer to that question: Firm size is limited by the need to weaken social comparisons between high-skill workers who have weak relative concerns and low-skill workers with strong relative concerns.\footnote{The `property right' \citep{Grossman1986, Hart1990} and `incentive systems' \citep{Holmstrom1994, Holmstrom1999} provide complementary explanations for where firm boundaries are drawn.} In that, my theory formalises and expands upon \cite{Nickerson2008}, who propose an informal theory of the firm based on the need for a firm to manage the cost of  social comparison. In addition to being the first to formalise the `social comparison' theory of firm, I expand  on \cite{Nickerson2008} by considering agents who differ in the strength of relative concerns, which allows me to explain why seemingly identical firms make different outsourcing decisions.\footnote{Interestingly, the seeds of the social comparison based theory of the firm are present already in \cite{Coase1937}, who dismisses that theory's importance on the grounds that it would imply that entrepreneurs earn less than their employees. This implication, however, is incorrect as soon as one allows for heterogeneity in skills: In my model, `entrepreneurs'  are paid more than `employees', simply because they are more skilled. This is true even though  `entrepreneurs' are indeed taking a pay cut compared to the case where they self-match.} I also show that the `social comparisons' theory of the firm provides a  natural explanation for the rise in outsourcing in the recent decades: Simply put, by increasing the difference in productivity between high- and low-skill workers, skill-biased technological change has drastically increased the cost of within-firm social comparisons, thus increasing firms' incentives to outsource.

\paragraph*{Technology, Inequality and Outsourcing}
The pronounced increase in wage inequality since the 1970s is commonly attributed to skill-biased technological change \citep{Bound1992, Katz1992, Juhn1993}. The more recently documented disproportionately large increase in between-wage inequality over the same period \citep{Song2018, Tomaskovic2020} has spurned a growing theoretical literature that tries to explain its causes. The explanations include growing complementarities in production \citep{Freund2022}, trade liberalisation \citep{Trottner2022}, skill-biased technological change \citep{Cortes2023, Gola2023}, and falling span-of-control costs \citep{Gola2023}. In particular, \cite{Boerma2021} develops a model in which---as in the present paper---submodular production function plays an important role in explaining changes in within-firm inequality. In \cite{Boerma2021} within-firm inequality is present only when production is submodular, as this implies (partially) negative and assortative matching. In my model, in contrast, relative concerns' heterogeneity produces within-firm inequality even with supermodular production; submodular production is thus needed only to explain the rise in domestic outsourcing, an issue on which \cite{Boerma2021} is silent about.\looseness=-1

To the best of my knowledge, this is the first paper to provide a formal model linking (skill-biased) technological change with outsourcing. \cite{Bergeaud2024} show empirically that firms connected to broadband internet engage in domestic outsourcing more than firms without such connection.  \cite{Bergeaud2024} use the informal theory developed in \cite{Abraham1996} to explain why technological change may cause outsourcing. One of the reasons for outsourcing put forward by \cite{Abraham1996} is that, in the absence of outsourcing, within-firm wage inequality may be constrained by workers' inequity aversion.  I model inequity aversion/relative concerns explicitly, and highlight that skill-biased technological change---by creating pressure for higher wage differentials between high- and low-skill workers---naturally leads to more outsourcing, which further amplifies SBTC's impact on wage inequality.\looseness=-1

\section{Model}\label{sec: model}

There is continuum of workers, who differ along two dimensions---skill $x_1 \in I_{x_1} \subset \mathbb{R}$ and the strength of their relative concerns $x_2\in I_{x_2} \subset \mathbb{R}$. The joint distribution of $(x_1, x_2) = \mathbf{x}$ is denoted by $H: I_{\mathbf{x}} \to [0, 1]$, where $I_{\mathbf{x}}\equiv I_{x_1} \times I_{x_2}$ and $H$ has full support. The marginal distribution of $x_k$ will be denoted by $H_{x_k}$; for notational simplicity I will assume that $I_{x_1}, I_{x_2}$ are both closed sets and $\text{Pr}(X_2 \leq x_2 | x_1)$ is absolutely continuous in $x_2$ for any $x_1 \in I_{x_1}$.

Workers sort into teams of size two, which makes this a one-sided, one-to-one sorting problem.
A match between a worker with skill $x_1^k$ and a worker with skill $x_1^j$ produces output according to a symmetric, increasing and twice-continuously differentiable function $F: I_{x_1}^2 \to \mathbb{R}$. The output $F(x_1^k, x_1^j)$ is then endogenously split into the wages of the two workers.

 In contrast to standard assignment models, a worker's utility depends not only on their own wage, but also on their co-worker's wage. For simplicity, I will focus on the most natural utility function that captures these \emph{relative concerns}:
 \begin{equation}\label{eq: utildefinition}
U(w^k, w^j; x_2^k) \equiv w^k+x_2^k(w^k-w^j),
 \end{equation}
 where $w^k$ denotes the worker's own wage and $w^j$ denotes their co-worker's wage. Hence, in addition to their own wage, workers care also about the difference between their wage and their co-worker's wage, and the strength of these relative concerns is precisely $x_2$. To ensure that each worker's utility increases in their own wage, I assume that $-0.5 <\underline{x_2}\equiv \min I_{x_2}$.

Each agent's outside option is strictly lower than $F(x_1^k, x_1^k)/2$, the utility the agent would receive in a `self-match' (i.e., a match with a worker of the same type). Finally, note that if $x_2=0$ for all workers, then the model reduces to a standard \cite{Sattinger1979} sorting model. I will refer to this case the \emph{benchmark} and denote it by the subscript $B$.

\subsection{Imperfectly Transferable Utility}\label{sec: ITU}
If workers differ in their relative concerns, then utility becomes imperfectly transferable. To see this, consider any matched pair in which one worker cares less about relative wages than their co-worker ($x_2^k<x_2^j$). Clearly then, any increase in that worker's own wage increases their utility by less than it reduces the utility of their co-worker, and utility becomes imperfectly transferable.

Given that, and adapting from \cite{Legros2007}, in order to define the equilibrium we need to first specify the \emph{utility possibility frontier} $\psi: I_{\mathbf{x}}^2\times \mathbb{R} \to \mathbb{R}$, such that
\begin{IEEEeqnarray*}{rCl}
\psi(\mathbf{x}^j, \mathbf{x}^k, u)&\equiv& \max_{w^j} U\left(F(x_1^k, x_1^j)-w^j, w^j, x_2^k\right)\\
&& \text{subject to } U(w^j, F(x_1^k, x_1^j)-w^j, x_2^j) \geq u.
\end{IEEEeqnarray*}
In other words, the utility possibility frontier $\psi(\mathbf{x}^j, \mathbf{x}^k, u)$ is equal to the highest utility worker $\mathbf{x}^k$ can achieve  with a co-worker $\mathbf{x}^j$ if the co-worker receives utility of at least $u$. It is easy to see that in our specific case, we have that
\[\psi(\mathbf{x}^j, \mathbf{x}^k, u)=0.5(1+2x_2^k)\left(F(x_1^k, x_1^j)\left((1+2x_2^k)^{-1}+(1+2x_2^j)^{-1}\right)-2u(1+2x_2^j)^{-1} \right).\]
\subsection{The Competitive Equilibrium}
A function $\mu: I_{\mathbf{x}} \to I_{\mathbf{x}}$ is a \emph{sorting} if it  satisfies $\mu(\mu(\mathbf{x}))=\mathbf{x}$: That is, if the co-worker of $\mathbf{x}$'s co-worker is $\mathbf{x}$ themselves. A pure sorting $\mu$ is \emph{feasible} if $\mu(\mathbf{X}) \sim H$, that is, if the distribution of traits implied by the sorting is the same as the actual distribution of traits; the set of all feasible sorting is denoted by $S(H)$.

All worker's take the payoff function $u(\mathbf{x}^j)$ as given. A  sorting $\mu$ is \emph{individually rational} given a payoff function $u$ if
\begin{equation*} \label{eq: IR}\mu(\mathbf{x}^k)=\mathbf{x}^j \Rightarrow \mathbf{x}^j \in \argmax_{\mathbf{x}} \psi(\mathbf{x}^k, \mathbf{x}, u(\mathbf{x})). \end{equation*}

Finally, in equilibrium it must be the case that
\begin{equation}
    \label{eq: max} u(\mathbf{x}^k)=\max_{\mathbf{x}^j} \psi(\mathbf{x}^j, \mathbf{x}^k, u(\mathbf{x}^j)).
\end{equation}

\begin{defi}\label{defi: eqm}
A competitive equilibrium consists of a sorting $\mu^*: I_{\mathbf{x}}^2 \to [0, 1]$ and a payoff function $u^*: I_{\mathbf{x}} \to \mathbb{R}$, such that $\mu^*$ is feasible and individually rational given $u^*$, and $u^*$ satisfies  \eqref{eq: max}.
\end{defi}

\subsection{Transferable Utility Representation}\label{sec: TUR}

As discussed in Section \ref{sec: ITU}, the presence of heterogeneous relative concerns makes utility imperfectly transferable.
Fortunately, however, for our specific choice of utility function, the model admits a \emph{transferable utility representation}, which simplifies the analysis considerably.

\begin{defi}[\cite{Legros2007}]
A utility frontier $\psi$ is \emph{TU-representable} if there exists a function $\Pi: I_{\mathbf{x}} \to \mathbb{R}$, and a strictly increasing function $Z(\cdot; \mathbf{x})$  such that
\[\text{for all } \mathbf{x}^j, \mathbf{x}^k, \tilde{u} \qquad Z(\psi(\mathbf{x}^j, \mathbf{x}^k, Z^{-1}(\tilde{u}; \mathbf{x}^j)); \mathbf{x}^k)=\Pi(\mathbf{x}^j, \mathbf{x}^k)-\tilde{u}.\]
\end{defi}

In other words, a sorting model has a TU-representation if there exists an increasing transformation of each worker's utility function, which can depend on that workers' own type, that makes utility transferable within each matched pair. If such a transformation exists then, of course, one can simply work with the transformed utility functions.

To see that the present model is TU-representable, re-scale the utility function by defining \begin{equation}\label{eq: utilrescale}\tilde{u}(w^k, w^j; x_2^k)\equiv 2u(w^k, w^j; x_2^k)(1+2x_2^k)^{-1} %
=w^k-w^j+(w^k+w^j)(1+2x_2^k)^{-1}.  \end{equation}
Clearly, the re-scaled utility $\bar{u}$ is perfectly transferable within each match, because
\[2\psi(\mathbf{x}^j, \mathbf{x}^k, 0.5(1+2x_2^j)\tilde{u}))(1+2x_2^k)^{-1}=\Pi(\mathbf{x}^j, \mathbf{x}^k)-\tilde{u},\]
where
\begin{equation}\label{eq: Pi}
    \Pi(\mathbf{x}^j, \mathbf{x}^k)\equiv F(x_1^k, x_1^j)\left((1+2x_2^k)^{-1}+(1+2x_2^j)^{-1}\right).
\end{equation}

Intuitively, the initial specification of utility is imperfectly transferable, because worker $\mathbf{x}^k$ gains $1+2x_2^k$ utils from an extra dollar. Hence, if the worker has stronger relative concerns than their co-worker ($x_2^k>x_1^j$), then she gains more utility from an extra dollar of output than the co-worker loses. Critically, however, this difference in utility gained vs. lost depends only on the teammates preferences, and not on the output produced or its split. For that reason, one can achieve a transferable utility representation by dividing each worker's utility by $1+2x_2^k$, that is, the number of utils they gain from an extra dollar of wage.

\subsection{Assumptions}\label{sec: mtdsorting}
In order to characterise the equilibrium I will need two further assumptions. To state the first one, let me define the functions
\[L(\mathbf{x}^k; x_1', x_1)\equiv (F(x_1^k, x_1')-F(x_1^k, x_1))(1+2x_2^k)^{-1}, \quad V(l; x_1', x_1)=\text{Pr}(L(\mathbf{x}; x_1', x_1)\leq l). \]
\begin{ass}[Common Rankings]\label{ass: common}
For any $x_1''', x_1'', x_1', x_1 \in I_{x_1}$ such that $x_1'''>x_1''$ and $x_1'>x_1$, and any $\mathbf{x}^k\in I_{\mathbf{x}}$ we have that
\[V(L(\mathbf{x}^k; x_1''', x_1''); x_1''', x_1'')=V(L(\mathbf{x}^k; x_1', x_1); x_1', x_1)\equiv v_1(\mathbf{x}^k).\]
I will refer to $v_1(\mathbf{x}^k)$ as the \emph{skill-preference index}.
\end{ass}
As we will see more explicitly in the proof of Theorem \ref{theo: exuniq}, Assumption \ref{ass: common} implies that there exists an index of skill and preference, such that the surplus function is supermodular only in $(x_1^k, v_1(\mathbf{x}^j))$ and $(x_1^j, v_1(\mathbf{x}^k))$. In other words, Assumption \ref{ass: common}, implies that the strength of relative concerns will affect sorting only through the skill-preference index.\footnote{Assumption \ref{ass: common} is inspired by Assumption 1 in \cite{Gola2021}.}

To state the second assumption, I will require the concept of a bi-variate copula.

\begin{defi}[Copula]
A \emph{bivariate copula} is a supermodular function $C: [0, 1]^2 \to [0, 1]$ such that $C(0, v)=C(u, 0)=0$, $C(1, v)=v$ and $C(u, 1)$.

A function $C_{\mathbf{Y}}: [0, 1]^2 \to [0, 1] $ is the copula of the bivariate random vector $\mathbf{Y}$ if
\[C(\text{Pr}(Y_1 \leq y_1), \text{Pr}(Y_2 \leq y_2))=\text{Pr}(Y_1 \leq y_1, Y_2 \leq y_2).\]
\end{defi}
 Sklar's Theorem \citep{Sklar1959} ensures there exists a copula for every random vector, and that this copula is unique if the random vector is continuously distributed. For discrete random vectors, a continuum of copulas exists.
\begin{ass}[Properties of the Copula]\label{ass: copula}
The distribution $H$ and the production function $F$ are such that there exists an \emph{exchangeable} copula $C$ of $(X_1, v_1)$, that is, such a copula that $C(u, v)=C(v, u)$ for all $u, v \in [0, 1]$.
\end{ass}

Assumptions  \ref{ass: common} and \ref{ass: copula} are undoubtedly strong; and yet, they are satisfied for many natural combinations of assumptions about the production function $F$ and the traits distribution $H$. The following assumption describes three large classes of specifications for which Assumptions  \ref{ass: common} and \ref{ass: copula} are both satisfied.

\begin{ass}[Sufficient Conditions]\label{ass: suff}
The distribution $H$ and the production function $F$ satisfy at least one of the following conditions:
\begin{enumerate}[label={A3.\arabic*}]
 \item \emph{Additive Production with Exchangeable Distribution:} $F(x_1^k, x_1^j)=K(x_1^k)+K(x_1^j)$ and the copula $C_{i}$ of $(x_1, 1/(1+2 x_2))$ is exchangeable; or \label{item: additive}
 \item \emph{Binary Skills:} $x_1 \in \{l, h\}$, with $h>l$ and $\text{Pr}(X_1=h)=0.5$;
 \footnote{\label{foot: pneqhalf} If $p\equiv\text{Pr}(X_1=h) \neq 0.5$ then the economy consists of two completely separate sub-economies: one in which workers always self-match, as there are two few workers of the other skill for them to possibly match, and one consisting of an equal measure of low- and high-skill workers who choose whether to self- or cross-match. The first economy is trivial, and the second is isomorphic to an economy with $p=0.5$. Thus, the assumption that $p=0.5$ is without loss for the characterisation of equilibrium.}
 or\label{item: binary}
 \item \emph{Multiplicative Production with Log-Elliptical Distribution} $F(x_1^k, x_1^j)=A+((x_1^kx_1^j)^c-1)/c$, where $c \neq 0$ and $(\ln x_1, \ln (1+2x_2)) \sim EC_2(\Delta, \Omega; \phi)$, that is, the characteristic function of the joint distribution of $(\ln x_1, \ln (1+2x_2))$ is of the form $c_X(t)\equiv \exp(it^T\Delta)\phi(t^T\Omega t)$, where $i=\sqrt{-1}$.\label{item: mult}
 \end{enumerate}
\end{ass}
Each of these classes of specifications is permissive in some dimensions, and restrictive in others. As exchangeability is satisfied for most commonly used families of copulas, \ref{item: additive} is very flexible in the choice of the distribution $H$, but allows neither strict super- nor strict submodularity of the production function. \ref{item: binary} provides full flexibility in the choice of the production function and the distribution of relative concerns, but at the cost of making skills binary.  \ref{item: mult} allows for both super- (for $c>0$) and submodular ($c<0$) production functions, and for  many standard traits distributions, such as the log-normal and log-t-student, but at the cost of making very specific functional form assumptions.

\begin{prop}\label{prop: assumptions}
Assumption \ref{ass: suff} implies Assumptions \ref{ass: common} and \ref{ass: copula}.
\end{prop}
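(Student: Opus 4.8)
The plan is to check Assumptions \ref{ass: common} and \ref{ass: copula} separately under each of \ref{item: additive}, \ref{item: binary} and \ref{item: mult}. Under \ref{item: additive} and \ref{item: mult} the unifying device is that the output increment factors,
\[ F(x_1^k,x_1') - F(x_1^k,x_1) \;=\; \kappa(x_1',x_1)\,\rho(x_1^k), \qquad \kappa(x_1',x_1) > 0 \text{ for } x_1' > x_1, \]
with a scalar $\kappa$ that is free of the worker's own type: under \ref{item: additive}, $\rho\equiv 1$ and $\kappa(x_1',x_1)=K(x_1')-K(x_1)>0$ because $F$ increasing forces $K$ increasing; under \ref{item: mult}, $\rho(x_1^k)=(x_1^k)^c$ and $\kappa(x_1',x_1)=((x_1')^c-x_1^c)/c$, which a one-line sign check shows is positive for $x_1'>x_1$ whatever the sign of $c$. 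Consequently $L(\mathbf{x}^k;x_1',x_1)=\kappa(x_1',x_1)\,W(\mathbf{x}^k)$ where $W(\mathbf{x})\equiv\rho(x_1)(1+2x_2)^{-1}$, so $V(l;x_1',x_1)=G_W(l/\kappa(x_1',x_1))$ with $G_W$ the c.d.f.\ of $W(\mathbf{X})$, and therefore $V(L(\mathbf{x}^k;x_1',x_1);x_1',x_1)=G_W(W(\mathbf{x}^k))$, which does not depend on $(x_1',x_1)$. That is precisely Assumption \ref{ass: common}, with skill--preference index $v_1=G_W\circ W$.

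For Assumption \ref{ass: copula} under \ref{item: additive} and \ref{item: mult}, note that $W(\mathbf{X})$ is absolutely continuous (a smooth strictly monotone image of the absolutely continuous $X_2$ in \ref{item: additive}, and of the jointly absolutely continuous $(\ln X_1,\ln(1+2X_2))$ in \ref{item: mult}), so $v_1=G_W(W)$ is the probability integral transform of $W$ and the copula of $(X_1,v_1)$ is the same as the copula of $(X_1,W)$. Under \ref{item: additive}, $W=(1+2X_2)^{-1}$, so this is the copula $C_i$ of $(X_1,(1+2X_2)^{-1})$, exchangeable by hypothesis. Under \ref{item: mult}, $W=X_1^c(1+2X_2)^{-1}$; taking logs of both coordinates, the copula of $(X_1,W)$ equals the copula of $\big(\ln X_1,\;c\ln X_1-\ln(1+2X_2)\big)$, the image of the elliptical vector $(\ln X_1,\ln(1+2X_2))$ under the invertible linear map $\begin{pmatrix}1&0\\c&-1\end{pmatrix}$; an invertible affine image of an elliptical law is elliptical with the same generator, and a bivariate elliptical copula is exchangeable because swapping the two coordinates leaves the standardised dispersion matrix, and hence the distribution, unchanged. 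So Assumption \ref{ass: copula} holds in both cases.

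In the binary case \ref{item: binary}, Assumption \ref{ass: common} is vacuous: the only pair $x_1'>x_1$ available in $\{l,h\}$ is $(h,l)$, so the only quantity the assumption refers to is $v_1(\mathbf{x})=V(L(\mathbf{x};h,l);h,l)$, and there is nothing to equate. For Assumption \ref{ass: copula}, first observe that $L(\cdot;h,l)$ is atomless---conditional on $X_1=l$, resp.\ $X_1=h$, it is a positive constant times $(1+2X_2)^{-1}$, and an equal mixture of two absolutely continuous laws is absolutely continuous---so $v_1\sim U[0,1]$. Since $X_1$ is discrete with $\Pr(X_1=h)=\Pr(X_1=l)=1/2$, the copula of $(X_1,v_1)$ is not unique and it is enough to build one exchangeable representative. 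The construction is a patchwork: the ``cross'' $\{1/2\}\times[0,1]\cup[0,1]\times\{1/2\}$ is forced by $C(1/2,t)=C(t,1/2)=\Pr(X_1=l,\,v_1\le t)$, the second equality being imposed by exchangeability; one checks that the masses this puts on the four squares $[0,\tfrac12]^2$, $[0,\tfrac12]\times[\tfrac12,1]$, $[\tfrac12,1]\times[0,\tfrac12]$, $[\tfrac12,1]^2$ are consistent with their edge values (the Fréchet bounds hold automatically); filling the two diagonal squares with the comonotone coupling of the relevant equal edge measures and the off-diagonal squares with an arbitrary coupling and its transpose yields a $2$-increasing function with uniform margins---a copula---that is symmetric by construction.

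The factorisation and the sign checks for $\kappa$ are routine. The two points that will need care are, in \ref{item: mult}, the (standard but not wholly trivial) facts that an affine image of an elliptical law is elliptical and that bivariate elliptical copulas are symmetric---here one should note the image is non-degenerate since the map is invertible, and that even a degenerate elliptical law has a symmetric (co- or counter-monotone) copula; and, the main obstacle, checking that the patchwork in \ref{item: binary} genuinely assembles into a copula, which means verifying the edge-mass consistency square by square and choosing the fillings symmetrically so that exchangeability survives.
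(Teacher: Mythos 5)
Your proposal is correct and takes essentially the same route as the paper's proof: a case-by-case verification in which Assumption \ref{ass: common} follows from multiplicative separability of the output increment under \ref{item: additive} and \ref{item: mult} and is vacuous under \ref{item: binary}, while Assumption \ref{ass: copula} follows from copula invariance plus exchangeability of the assumed/elliptical copula in the two continuous cases and from an explicit four-quadrant construction in the binary case. Your patchwork copula (comonotone couplings on the diagonal blocks, an arbitrary coupling and its transpose off the diagonal) differs only cosmetically from the paper's closed-form $C^1+C^2(u,v)+C^2(v,u)+C^3$, which uses product couplings on the same blocks, and the edge-mass consistency you defer does check out.
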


\deferred[proof: assumptions]{\paragraph*{Proof of Proposition \ref{prop: assumptions}}

I will show that each of \ref{item: additive}-\ref{item: binary} implies Assumptions \ref{ass: common} and \ref{ass: copula}.\looseness=-1

\ref{item: additive}: Assumption \ref{ass: common} is satisfied trivially, with $v_1(x_1^j, x_2^j; x_1, x_1')=1-G_{x_2}(x_2)$. The copula of $x_1, v_1$ and $x_1, 1/(1+2x_2)$ are the same, and thus Assumption \ref{ass: copula} is satisfied as well. \looseness=-1
    \ref{item: binary}: Assumption \ref{ass: suff} $\Rightarrow$ Assumption \ref{ass: common} immediately, because if the distribution of skill is binary, then there are only two levels of skill.

    I will show that \ref{ass: suff} $\Rightarrow$ Assumption \ref{ass: copula} by constructing copula $C_h$ of $x_1, v_1$ that satisfies  Assumption \ref{ass: copula}. First, define the functions
\begin{align*}
    &G(v) \equiv \text{Pr}(v_1 \leq v | X_2=L), \quad Z(v) \equiv \text{Pr}(v_1 \leq v | X_2=H)=2v-G(v),\\
    &C^1(u, v)=\frac{G(\min\{u, 0.5\})G(\min\{v, 0.5\})}{2G(0.5)}, \\ &C^2(u, v) = \frac{G(\max\{u, 0.5\})-G(0.5)}{2}\frac{Z(\min\{v, 0.5\})}{Z(0.5)}, \\
    &C^3(u, v)=\frac{(Z(\max\{u, 0.5\})-Z(0.5))(Z(\max\{v, 0.5\})-Z(0.5)}{2(1-Z(0.5))}.
\end{align*}

The candidate copula is then
\begin{equation}\label{eq: symmcopula}
  C_h(v_1, v_1) \equiv C^1(v_1, v_1)+(C^2(v_1, v_1)+C^2(v_1, v_1))+C^3(v_1, v_1).
\end{equation}
$C_h$ is symmetric because $C^1(u, v), C^3(u, v)$ and $C^2(u, v)+C^2(v, u)$ are all symmetric. It is a copula (a) because $\deriv{v_1}\deriv{v_1}C_h(v_1, v_1)$ exists almost everywhere, and is strictly positive wherever it exists, and (b) by symmetry and the facts that $C_h(v_1, 0)=0$, $C_h(v_1, 1)=v_1$. What remains to be shown is that $C_h(H_{x_1}(x_1), v_1)=\text{Pr}(X_1 \leq x_1, v_1 \leq v_1)$. This is true for $x_2=h$ by the definition of a copula, and follows for $x_2=l$ by inspection of  \eqref{eq: symmcopula} and the fact that $\text{Pr}(X_1 \leq l, v_1 \leq v_1)=0.5G(v)$.

\ref{item: mult}: An important property of the elliptical family of distributions, is that any linear transformation of an elliptically distributed random variable is elliptically distributed and has the same generator function $\phi$ \citep[e.g. Theorem 2.16 in ][]{Fang1990}. Formally, if $\mathbf{z} \sim E C_{n}(\Delta_z, \Omega_z, \phi)$ with $\operatorname{rank}(\Omega_z)=k, \mathbf{B}$ is an $m \times n$ matrix
and $\boldsymbol{p}$ is an $m \times 1$ vector, then
$
\boldsymbol{p}+\mathbf{B} \mathbf{z} \sim E C_{m}\left(\boldsymbol{p}+\mathbf{B} \Delta_z, \mathbf{B} \Omega_z \mathbf{B}^T, \phi\right).
$
Define the random variable $\bar{\mathbf{V}}=\mathbf{A} (\ln X_1, \ln (1+2X_2))^T$, where $A\equiv\left(\begin{smallmatrix} 1 & 0 \\ c & -1 \end{smallmatrix}\right)$. Clearly, $v_1(x_1^j, x_2^j; x_1, x_1')=\text{Pr}(\bar{v_1} \leq c \ln x_1-\ln(1+2x_2) )$. Assumption \ref{ass: common} is thus satisfied; Assumption \ref{ass: copula} is satisfied as well, because the copula of $(X_1, v_1)$ is also the copula of $\bar{\mathbf{V}}$, and copulas of log-elliptical random variables are exchangeable.
}

In Section \ref{sec: char}, I will derive general results that hold for, at least, all of the cases covered by Assumption \ref{ass: suff}. In Section \ref{sec: ToF}, I will focus on the binary skills case (\ref{item: binary}), as this case allows for both super- and submodular production functions while retaining tractability.

\subsection{Discussion}
\paragraph{Global Status}
The model features only within-firm social comparisons (local status) but no between-firm comparisons (global status). That is unrealistic: people compare themselves not only to their co-workers, but also to friends, family and acquaintances.
Fortunately, restricting attention to local status is without loss of generality. To see this, let us add a global status term into each worker's utility function:
\[u^h(w^k, w^j, \bar{F}; x_2^k, x_3^k)=w^k+x_2^k(w^k-w^j)+x_3^k(w^k+w^j-\bar{F}),\]
where $x_3\geq -0.5$ is worker-specific preference for global status, and $\bar{F}$ is the average economy-wide production. The idea here is simple: Social comparisons outside of the workplace are likely based on firm-wide characteristics, which are more easily observable for an outsider than the worker's individual wage. Thus, people who work for a firm that produces a lot of output, and thus also pays high wages on average, enjoy high global status.

As any pair of workers is of measure zero, the sorting decision of any individual worker has no impact on $x_3 \bar{F}$, and thus this term can be dropped.
We can then use the same monotone transformation of utility as in Section \ref{sec: TUR} to get
\[\tilde{u}^h(w^k, w^j; x_2^k, x_3^k)=w^k-w^j+(w^k+w^j)(1+2x_3^k)(1+2x_2^k)^{-1}.\]
In other words, we can define a new random variable $\tilde{x_2}\equiv 2(x_2-x_3)/(1+2x_2),$ and then the model with halo effect becomes isomorphic to the baseline model in which workers' type is $(x_1, \tilde{x_2})$. Thus, global status provides a compelling interpretation for negative strength of relative concerns: Workers with a negative $\tilde{x_2}$ are simply workers who care more strongly about their global status than their within-firm relative concerns.\footnote{Some of the results regarding wage inequality assume that $\underline{x}_2 \geq 0$: In the light of this discussion, this assumption requires that workers care more strongly about within- then between-firm relative concerns.}

\paragraph{Inequity Aversion}\label{par: inequity}  My model can be also used to study the impact that inequity aversion has on labour market sorting.
Consider a simple \cite{Fehr1999} utility function with heterogeneous inequity aversion:
\[U(w^k, w^s) = w^k - \alpha \max\{w^j-w^k,0\}-\beta\max\{w^k-w^j,0\}\]
Here,  $\alpha\geq 0$, $\beta \in [0, \frac{1}{2})$---to ensure that, keeping the team's output constant, utility increases in own wage---and both $\alpha$ and $\beta$ differ across individuals. In general, a sorting model in which workers have this utility function is different from mine. However, if skill are binary, then the inequity aversion model is isomorphic to my model if $x_2=-\beta$ for high skill workers and $x_2=\alpha$ for low-skill workers. To understand why, note that because utility increases in own wage and output increases in skill, in the relative concerns model, the high-skill workers will always earn more than the low-skill worker in any cross-match. Thus, having negative (positive) relative concerns is equivalent to having inequity aversion for high- (low-)skill workers.
I will call a preference distribution $H_{x_2}$ \emph{inequity aversion equivalent} if $\text{Pr}(X_2 \in (-0.5, 0] | x_1=H)=\text{Pr}(X_2 \geq 0 | x_1=L)=1$, in which case the binary skill model is isomorphic to the inequity aversion model. This class of preferences will play an import role in the discussions about sorting (Section \ref{sec: sorting}) and wage inequality (Section \ref{sec: inequality}). \looseness=-1

\paragraph{Relative Concerns as Private Information}\label{par: PI} My model, in which there is complete information about workers' types, is isomorphic to a model in which only skills are public information, but preferences are private information.  I show this formally in the Appendix, but the intuition is simple:  A worker's strength of relative concerns does not affect their co-worker's payoff; only the wage offered to the co-worker and the worker's skill do.
As a result, the co-worker is indifferent between all workers of the same skill who offer them the same wage, and
workers have no incentive to lie about the strength of their relative concerns.

\deferred[deriv: truthtelling]{\paragraph*{The Argument from \ref{par: PI}}
Truth-telling about ones preference is incentive compatible under the sorting and payoff functions that hold in the competitive equilibrium.
 To see this, suppose that $x_1$ is perfectly observable, but $x_2$ is not.  Workers first announce some $\hat{x_2}$, and after that sorting commences with $(x_1, \hat{x_2})$ treated as each workers true type. Finally, on top of the requirements specified in Definition \ref{defi: eqm}, we impose the \emph{truth-telling condition}:
\begin{equation}\label{eq: truth}
\begin{split}
   u^*(x_1, x_2)&=\max_{\hat{x_2}} (1+2x_2)w^*(x_1, \hat{x_2})-x_2F(x_1, \mu^*_1(x_1, \hat{x_2}))\\
  &=\max_{\hat{x_2}} (1+x_2)F(x_1, \mu^*_1(x_1, \hat{x_2}))-(1+2x_2)w^*(\mu(x_1, \hat{x_2}),
\end{split}
\end{equation}
where $w^*$ as defined in  \eqref{eq: wageutility}. In other words, worker $\mathbf{x}$ is free to match with any co-worker of a worker with skill $x_1$ as long as they pay them the same wage; and, of course, truth-telling requires that the utility maximising choice is the one that corresponds to their true $x_2$.
Critically, however, every worker was equally free to do so under complete information! Formally, we have that
\[
 \psi(\mathbf{x}^j, \mathbf{x}^k, u(\mathbf{x}^j))= (1+x_2^k)F(x_1^k, x_1^j)-(1+2x_2^k)w^*(\mathbf{x}^j)
\]
and thus individual rationality implies  \eqref{eq: truth}.
Overall, therefore, the requirement of truth-telling does not change the equilibrium conditions.}

\section{Characterising the Equilibrium}\label{sec: char}
In this Section, I will characterise the equilibrium sorting, payoff and wage functions, and use this characterisation to explore how the presence of relative concerns affects sorting patterns, welfare and wage inequality in comparison to the benchmark.

\subsection{Equilibrium Sorting}

It is well-established that in two-sided sorting problems with transferable utility the solution to the planner's problem coincides with the competitive equilibrium \citep{Gretsky1992}. Further, \cite{McCann2010} prove that the Monge-Kantorovich duality holds also for one-sided problems with transferable utility, from which it follows that the solution of the planner's problem coincides with the competitive equilibrium of my model's TU-representation.

\begin{theo}[\cite{McCann2010}]\label{theo: dualprimal}
In a sorting model with transferable utility, sorting $\mu^*$ and a payoff function $\tilde{u}^*$ constitute a competitive equilibrium if and only if $\mu^*$ solves the planner's problem:
\[\mu^* \in \argmax_{\mu \in \mathcal{S}(H)} V_{P\mathbf{x}}(\mu), \text{ where } V_P(\mu)\equiv\int_{I_{\mathbf{x}}} \Pi(\mathbf{x}, \mu(\mathbf{x})) \mathrm{d} H(\mathbf{x}),\] and $\tilde{u}^*$ solves its dual problem, that is $$\tilde{u}^* \in \argmin_{\tilde{u}} \left(\int_{I_{\mathbf{x}}} \tilde{u}(\mathbf{x}) \mathrm{d} H(\mathbf{x}) \quad \text{s.t. } \Forall_{(\mathbf{x}^j, \mathbf{x}^k)\in I_{\mathbf{x}}^2} \tilde{u}(\mathbf{x}^k)+\tilde{u}(\mathbf{x}^j) \geq \Pi(\mathbf{x}^j, \mathbf{x}^k)\right).$$
\end{theo}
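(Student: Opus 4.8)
The plan is to derive the stated ``if and only if'' from Monge--Kantorovich duality, adapted to the one-sided (symmetric) setting, in four moves: weak duality, attainment together with the absence of a gap, and then a dictionary between the saddle point of the transport problem and the equilibrium conditions of Definition~\ref{defi: eqm} read in the transferable-utility representation. The first move is elementary. Fix any feasible sorting $\mu\in\mathcal{S}(H)$ and any $\tilde u$ that is feasible for the dual, i.e.\ $\tilde u(\mathbf{x}^k)+\tilde u(\mathbf{x}^j)\ge\Pi(\mathbf{x}^j,\mathbf{x}^k)$ for all $(\mathbf{x}^j,\mathbf{x}^k)\in I_{\mathbf{x}}^2$. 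Evaluating this inequality at $(\mathbf{x},\mu(\mathbf{x}))$ and integrating against $H$, and using feasibility $\mu_{\#}H=H$ to replace $\int \tilde u(\mu(\mathbf{x}))\,\mathrm{d}H(\mathbf{x})$ by $\int\tilde u\,\mathrm{d}H$, gives $V_P(\mu)\le 2\int\tilde u\,\mathrm{d}H$; the factor two merely records that integrating $\Pi$ over workers counts each pair twice. Hence every attainable primal value lies below every attainable dual value.

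The second move is attainment and the no-gap identity, and this is where the one-sided nature of the problem actually does work. The set of couplings of $H$ with $H$ is weakly compact (Prokhorov, since $I_{\mathbf{x}}$ is compact), $\mathcal{S}(H)$ is the weakly closed subset of \emph{symmetric} such couplings, and $\Pi$ is continuous and bounded, so $V_P$ attains a maximum at some $\mu^*\in\mathcal{S}(H)$. The key fact, which is exactly the content of \cite{McCann2010}, is that imposing symmetry is costless: given any maximiser $\gamma$ of $\int\Pi\,\mathrm{d}\gamma$ over all couplings of $H$ with $H$, the symmetrisation $\tfrac12(\gamma+\gamma^{\top})$ is again a coupling, is symmetric, and carries the same value because $\Pi$ is symmetric and the objective is linear; so the symmetric and the unrestricted transport problems have the same value, and classical Kantorovich duality applies. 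By symmetry of $\Pi$ the dual optimiser can moreover be taken to be a single potential $\tilde u^*$ rather than a pair, with $\max_{\mu\in\mathcal{S}(H)}V_P(\mu)=2\min_{\tilde u}\int\tilde u\,\mathrm{d}H$ and no duality gap.

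The third move converts between the saddle point and the competitive equilibrium. ($\Rightarrow$) If $(\mu^*,\tilde u^*)$ is a competitive equilibrium, then \eqref{eq: max} in the transferable-utility representation reads $\tilde u^*(\mathbf{x}^k)=\max_{\mathbf{x}^j}\bigl(\Pi(\mathbf{x}^j,\mathbf{x}^k)-\tilde u^*(\mathbf{x}^j)\bigr)$, which says precisely that $\tilde u^*$ is dual-feasible, while individual rationality identifies the maximiser as the assigned partner $\mu^*(\mathbf{x}^k)$, so the dual inequality holds with equality along $\mu^*$. Integrating that equality against $H$ gives $V_P(\mu^*)=2\int\tilde u^*\,\mathrm{d}H$, and weak duality then forces $\mu^*$ to be primal-optimal and $\tilde u^*$ dual-optimal. ($\Leftarrow$) Conversely, if $\mu^*$ is primal-optimal and $\tilde u^*$ dual-optimal, the no-gap identity and the weak-duality computation together force $\tilde u^*(\mathbf{x})+\tilde u^*(\mu^*(\mathbf{x}))=\Pi(\mathbf{x},\mu^*(\mathbf{x}))$ for $H$-almost every $\mathbf{x}$; combined with dual feasibility (the same inequality for all pairs), this is exactly individual rationality together with \eqref{eq: max}, and $\mu^*\in\mathcal{S}(H)$ is feasibility, so $(\mu^*,\tilde u^*)$ is a competitive equilibrium.

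The main obstacle is the second move: establishing that the symmetry constraint defining $\mathcal{S}(H)$ does not lower the optimal value, and that the dual can be taken in single-potential form. This is the genuinely one-sided ingredient and the reason the statement is credited to \cite{McCann2010} rather than to standard two-sided assignment theory; once it is available, weak duality, complementary slackness, and the equilibrium-to-saddle-point translation are routine. A secondary technical point is measure-zero bookkeeping: a priori the primal maximiser is an optimal transport \emph{plan}, and one needs structure on $\Pi$ (continuity together with the single-crossing/twist-type conditions that Assumptions~\ref{ass: common}--\ref{ass: copula} will supply) to know it is induced by a bona fide involution $\mu^*$, and to upgrade the $H$-a.e.\ equality in the dual constraint wherever a pointwise statement is needed.
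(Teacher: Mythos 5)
The paper does not actually prove this theorem: it is imported wholesale from \cite{McCann2010}, so there is no in-paper argument to compare yours against. Taken on its own merits, your sketch is the right one and correctly isolates the genuinely one-sided ingredient, namely that symmetrising an optimal coupling $\gamma\mapsto\tfrac12(\gamma+\gamma^{\top})$ preserves marginals and value because $\Pi$ is symmetric and the objective linear, so the symmetric-coupling problem inherits classical Kantorovich duality and a single dual potential suffices. The weak-duality computation (with the factor of two from double-counting pairs), the translation of \eqref{eq: max} into dual feasibility with attainment at the assigned partner, and the complementary-slackness reading of the reverse direction are all as they should be. The one point you flag but do not fully close is load-bearing for the ``$\Leftarrow$'' direction as stated: the planner's problem in the theorem is posed over $\mathcal{S}(H)$, the set of \emph{pure involutive} sortings, whereas your no-gap identity is established for the relaxed problem over symmetric couplings. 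If the supremum over pure sortings were strictly below the relaxed value, a maximiser over $\mathcal{S}(H)$ need not satisfy $\tilde u^*(\mathbf{x})+\tilde u^*(\mu^*(\mathbf{x}))=\Pi(\mathbf{x},\mu^*(\mathbf{x}))$ $H$-a.e., and the equivalence would fail; so the statement implicitly presupposes that the pure problem attains the Kantorovich value (which is what \cite{McCann2010} establish under their regularity conditions, and what the paper's Theorem \ref{theo: exuniq} delivers constructively under Assumptions \ref{ass: common} and \ref{ass: copula}). Making that dependence explicit, rather than relegating it to ``measure-zero bookkeeping,'' is the only substantive repair your write-up needs.
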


Thus, I can characterise the equilibrium sorting of my model by solving the planner's problem for the TU-representation presented in Section \ref{sec: TUR}. Under Assumptions \ref{ass: common} and \ref{ass: copula}, the solution to the planner's problem exists and is easy to characterise.
\begin{theo}[Equilibrium Sorting] \label{theo: exuniq}
Under Assumptions \ref{ass: common} and \ref{ass: copula}, a sorting $\mu^*$ is an equilibrium sorting if and only if induces positive and assortative matching between workers' $x_1$ and co-workers' $v_1$, that is, iff $\mu^*$ satisfies  (a) $\text{Pr}(\mu^*_1(\mathbf{x}) \leq x)=H_{x_1}(x)$ and $\text{Pr}(v_1(\mu^*(\mathbf{x})) \leq v)=v$, and (b) $x_1^{\prime} > x_1 \Rightarrow v_1(\mu^*(\mathbf{x}^{\prime})) > v_1(\mu^*(\mathbf{x}))$.\newline
In particular, if $H_{x_1}$ is strictly increasing, then the equilibrium sorting is given by
\begin{equation}\label{eq: sorting}\mu^*(x_1, x_2)=[H_{x_1}^{-1}(v_1(x_1, x_2)), z(H_{x_1}^{-1}(v_1(x_1, x_2)), H_{x_1}(x_1) ]^T,\end{equation} where $z(x_1, v_1(x_1, x_2))\equiv x_2$ is the inverse of $v_1$ with respect to $x_2$.

\end{theo}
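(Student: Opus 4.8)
The plan is to reduce the claim, via Theorem~\ref{theo: dualprimal}, to a statement about the planner's problem $\max_{\mu\in\mathcal{S}(H)}\int_{I_{\mathbf{x}}}\Pi(\mathbf{x},\mu(\mathbf{x}))\,\mathrm{d}H(\mathbf{x})$ for the TU-representation of Section~\ref{sec: TUR}, and then to solve that problem by exhibiting the right supermodular structure. Concretely, I would proceed in four steps: (i) rewrite the maximand, using symmetry of $F$ and $\Pi$ together with the involution property of $\mu$, as $2\int F(x_1,\mu_1(\mathbf{x}))(1+2x_2)^{-1}\,\mathrm{d}H(\mathbf{x})$; (ii) use Assumption~\ref{ass: common} to collapse the dependence of the maximand on the two-dimensional type onto the scalar index $v_1(\mathbf{x})$, and show that the resulting reduced surplus is supermodular; (iii) use Assumption~\ref{ass: copula} to show that positive and assortative matching (PAM) between the worker's $v_1$ and the co-worker's $x_1$ --- equivalently, by the involution, between the worker's $x_1$ and the co-worker's $v_1$ --- is feasible; and (iv) conclude by a rearrangement argument that the feasible maximisers of the planner's problem are exactly the PAM sortings, and read off the closed form.

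For step (ii), fix the reference skill $\underline{x_1}=\min I_{x_1}$ and write $F(x_1,\mu_1(\mathbf{x}))(1+2x_2)^{-1}=F(x_1,\underline{x_1})(1+2x_2)^{-1}+L(\mathbf{x};\mu_1(\mathbf{x}),\underline{x_1})$, where the first term integrates to a $\mu$-independent constant. By Assumption~\ref{ass: common}, for every $a>\underline{x_1}$ the population rank of $L(\,\cdot\,;a,\underline{x_1})$ equals $v_1(\mathbf{x})$, so $L(\mathbf{x};a,\underline{x_1})=\ell(v_1(\mathbf{x}),a)$ for the function $\ell$ equal to the $v_1$-quantile of $L(\,\cdot\,;a,\underline{x_1})$; hence the planner's problem is equivalent to $\max_\mu\int\ell(v_1(\mathbf{x}),\mu_1(\mathbf{x}))\,\mathrm{d}H(\mathbf{x})$. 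Supermodularity of $\ell$ then follows from Assumption~\ref{ass: common} again: for $a'>a>\underline{x_1}$, additivity of the output differences $F(x_1,a')-F(x_1,\underline{x_1})=(F(x_1,a')-F(x_1,a))+(F(x_1,a)-F(x_1,\underline{x_1}))$ gives $\ell(v,a')-\ell(v,a)$ equal to the $v$-quantile of $L(\,\cdot\,;a',a)$, a distribution on $\mathbb{R}_{\ge 0}$, which is nondecreasing in $v$ --- and strictly increasing whenever $F$ is strictly increasing in its second argument and the conditional law of $X_2$ is (as assumed) absolutely continuous. This is exactly the sense in which, in $(x_1,v_1)$-coordinates, the surplus is supermodular only in the cross-pairs $(x_1^k,v_1(\mathbf{x}^j))$ and $(x_1^j,v_1(\mathbf{x}^k))$.

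For step (iii), a sorting that is PAM in $(x_1,v_1)$ maps a worker whose pair of ranks $(H_{x_1}(x_1),v_1(x_1,x_2))$ equals $(s,t)$ to a co-worker whose ranks are $(t,s)$; this is an involution, and it pushes $H$ forward to itself precisely when the copula $C$ of $(X_1,v_1)$ satisfies $C(s,t)=C(t,s)$, i.e.\ under Assumption~\ref{ass: copula}. (In the binary-skill case the copula is not unique; one works with the exchangeable copula constructed in the proof of Proposition~\ref{prop: assumptions}, and PAM then pins down only a block structure, whose within-block freedom is still consistent with conditions (a)--(b).) Given feasibility, the Hardy--Littlewood rearrangement inequality applied to $\max_\mu\int\ell(v_1(\mathbf{x}),\mu_1(\mathbf{x}))\,\mathrm{d}H$ --- or, equivalently, Theorem~\ref{theo: dualprimal} together with Proposition~11(b) of \cite{Lindenlaub2017} for the two-sided, symmetric-surplus reformulation --- shows that comonotone matching of $v_1(\mathbf{x})$ with $\mu_1(\mathbf{x})$ attains the maximum; strict supermodularity makes it the unique maximiser up to ties, which the characterisation (a)--(b) already accommodates. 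Combined with Theorem~\ref{theo: dualprimal}, this yields the ``if and only if''. Finally, when $H_{x_1}$ is strictly increasing a type is determined by $(H_{x_1}(x_1),v_1)$, so the PAM sorting is unique, and the $(s,t)\mapsto(t,s)$ description unwinds, via $z(x_1,\cdot)=v_1(x_1,\cdot)^{-1}$, into the closed form \eqref{eq: sorting}.

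The main obstacle is step (ii): making rigorous that Assumption~\ref{ass: common} genuinely allows the two-dimensional worker type to be replaced by the scalar $v_1(\mathbf{x})$ in the maximand --- that is, that the $\mu$-dependent part of the surplus, after peeling off the constant term $F(x_1,\underline{x_1})(1+2x_2)^{-1}$, really is a function of $v_1(\mathbf{x})$ and the co-worker's skill alone --- and that the quantile function $\ell$ so obtained is (strictly) supermodular. The secondary difficulty is handling the non-strictly-increasing and atomic cases (notably binary skills) so that the set of maximisers coincides with the set of sortings described by (a)--(b) rather than containing spurious extra maximisers; this is where the explicit exchangeable copula of Proposition~\ref{prop: assumptions} does the work.
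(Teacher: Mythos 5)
Your proposal is correct and follows essentially the same route as the paper: peeling off the additively separable term $F(x_1,\tilde x_1)(1+2x_2)^{-1}$, using Assumption \ref{ass: common} to write the remaining surplus as a quantile function $\ell(v_1(\mathbf{x}),\mu_1(\mathbf{x}))$ (the paper's $V^{-1}(v_1;\cdot,\tilde x_1)$), establishing supermodularity via the telescoping identity $\ell(v,a')-\ell(v,a)=$ the $v$-quantile of $L(\cdot;a',a)$, invoking Assumption \ref{ass: copula} for feasibility of PAM between $x_1$ and $v_1$, and closing with a standard rearrangement result. Your step (i), folding the two sides of the surplus into one integral via the involution and symmetry of $F$, is a cosmetic repackaging of the paper's two-term $\pi$ but changes nothing substantive.
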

 \begin{proof}
 Define $\mathbf{v}(\mathbf{x}^k, \mathbf{x}^j)\equiv(v_1(\mathbf{x}^k), x_1^j)$, that is, a vector of the worker's skill-preference index and the co-worker's skill. The idea of the proof is to rewrite the planner's problem in terms of the vectors $\mathbf{v}(\mathbf{x}^k, \mathbf{x}^j), \mathbf{v}(\mathbf{x}^j, \mathbf{x}^k)$ and show that the resulting surplus function implies that $x_1^j$ is a complement to  $v_1(\mathbf{x}^k)$ and $x_1^k$ is a complement to  $v_1(\mathbf{x}^j)$, and that there is no other relevant complementarity or substitutability between the elements of  $\mathbf{v}(\mathbf{x}^k, \mathbf{x}^j), \mathbf{v}(\mathbf{x}^j, \mathbf{x}^k)$. Thus, the planner wants to match workers with high $x_1^k$ to co-workers with high $v_1^j$.\looseness=-1

 Formally, select an arbitrary $\tilde{x}_1 \in I_{x_1}$, and define
\begin{align*}
    &\pi(\mathbf{v}(\mathbf{x}^k, \mathbf{x}^j), \mathbf{v}(\mathbf{x}^j, \mathbf{x}^k))\equiv V^{-1}(v_1(\mathbf{x}^k); x_1^j, \tilde{x}_1)+V^{-1}(v_1(\mathbf{x}^j); x_1^k, \tilde{x}_1)\\
    &V_{P\pi}(\mu)\equiv \int_{I_{\mathbf{x}}}\pi(\mathbf{v}(\mu(\mathbf{x}), \mathbf{x}), \mathbf{v}(\mathbf{x}, \mu(\mathbf{x}))) \mathrm{d} H(\mathbf{x}).
\end{align*}

Note that
\[\Pi(\mathbf{x}^k, \mathbf{x}^j)=\pi(\mathbf{v}(\mathbf{x}^k, \mathbf{x}^j), \mathbf{v}(\mathbf{x}^j, \mathbf{x}^k))+(F(x_1^k, \tilde{x}_1))(1+2x_2^k)^{-1}+(F(\tilde{x}_1, x_1^j))/(1+2x_2^j).\]
Because the last two terms are additively separable in $\mathbf{x}^j, \mathbf{x}^k$, they can be added or subtracted from the surplus function with no impact on the maximiser of the planner's problem, so that
\[\max_{\mu \in \mathcal{S}(H)} V_{P\mathbf{x}}(\mu)=2E_H\left((F(x_1, \tilde{x}_1))/(1+2x_2) \right)+\max_{\mu \in \mathcal{S}(H)} V_{P\pi}(\mu).\]
Because $\mathbf{v}(\mathbf{x}^k, \mathbf{x}^j)$ depends only on $x_1^j$ and $v_1(\mathbf{x}^k)$, the strength of relative concerns $x_2$ affects a worker's match \emph{only through its impact on the rank $v_1(\mathbf{x}^k)$}.

 By construction, the mapping $\pi$ is additively separable in $\mathbf{v}(\mathbf{x}^k, \mathbf{x}^j)$ and $\mathbf{v}(\mathbf{x}^j, \mathbf{x}^k)$. Therefore, the only aspects of $\mu$ that affect $V_{P\mathbf{v}}(\mu_g)$ are the bi-variate distributions of  $\mathbf{v}(\mathbf{x}^k, \mathbf{x}^j)$ and $\mathbf{v}(\mathbf{x}^j, \mathbf{x}^k)$ it induces.  Note that
 $V^{-1}(v_1; v_2, \bar{x}_1)=(F(x_1, v_2)-F(x_1, \bar{x}_1))(1+2 z(x_1, v_1))^{-1}$ for all $x_1$ in the domain of $z(\cdot, v_2)$. It follows by  Assumption \ref{ass: common} that
 \[ \pi((v_1, v_2'), \mathbf{v}^{\prime \prime})-\pi((v_1, v_2), \mathbf{v}^{\prime \prime})=V^{-1}(v_1; v_2', \bar{x}_1)-V^{-1}(v_1; v_2, \bar{x}_1)=V^{-1}(v_1; v_2', v_1)\]
 increases in $v_1$ and thus $\pi(\mathbf{v}(\mathbf{x}^k, \mathbf{x}^j), \mathbf{v}(\mathbf{x}^j, \mathbf{x}^k))$ is supermodular in $\mathbf{v}(\mathbf{x}^k, \mathbf{x}^j)$.
Therefore, by standard results \citep[see, for example, Theorem 4.3 in][]{Galichon2016}, $V_{P\mathbf{v}}(\mu_g)$ cannot reach a value higher than that achieved for sortings that satisfy (b), that is, sorting which ensure that $v_1^j$ increases deterministically in $x_1^k$. Because $v_1(\mu^*(\mathbf{x}))$ strictly increases in $x_1$, $\mu^*_1(\mathbf{x})$ increases in $v_1(\mathbf{x})$; together with (a)---which must be satisfied for any feasible sorting---this implies that $\mu^*_1(\mathbf{x})$ depends on $x_1$ only through $v_1(\mathbf{x})$. It thus follows that the copula of $(\mu_1^*(\mathbf{x}), v_1(\mu^*(\mathbf{x}))$ is the same as the copula of $(v_1(\mathbf{x}), x_1)$, and thus $\mu^*$ is feasible by Assumption \ref{ass: copula}; it follows that $\mu^* \in\argmax_{\mu \in \mathcal{S}(H)} V_{P\mathbf{x}}(\mu)$. Finally, with strictly increasing $H_{x_1}$ only sorting $\mu^*$ satisfies (a) and (b).

\end{proof}

In equilibrium, high-skill workers match workers with high skill-preference index. To better understand what this implies for sorting in the $(x_1, x_2)$ space,  let us work through the case of homogeneous $x_2$, as well as the three specifications satisfying Assumption \ref{ass: suff}.

\paragraph{Homogeneous Relative Concerns}  If $x_2$ is the same for all workers, then $L(\mathbf{x}^k, x_1', x_1)$ depends on $x_1^k$ only. Assumption \ref{ass: common} is then satisfied if and only if the production function is either strictly supermodular or strictly submodular. Under strictly supermodular production $L(\mathbf{x}^k, x_1', x_1)$ strictly increases in $x_1^k$, and thus $v_1(\mathbf{x})=H_{x_1}(x_1)$ and we obtain positive and assortative matching (PAM) in skills; under strictly submodular production $L(\mathbf{x}^k, x_1', x_1)$ strictly decreases in $x_1^k$, and thus $v_1(\mathbf{x})=1-H_{x_1}(x_1)$ and negative assortative matching (NAM) obtains. Therefore, the sorting patterns are exactly the same as those derived by \cite{Becker1973} and \cite{Sattinger1979} for the model without relative concerns. This is our first insight: sorting depends on the strength relative concerns only if workers' preferences are heterogeneous.\looseness=-1

\paragraph{Additive Production}\label{par: sortingadd} Under Assumption \ref{item: additive}, $L(\mathbf{x}^k, x_1', x_1)$ becomes $(K(x_1')-K(x_1))(1+2x_2^k)^{-1}$ and thus depends only on $x_2^k$ (negatively) but not on $x_1^k$. Accordingly, $v_1(\mathbf{x})=1-H_{x_2}(x_2)$ and
\begin{equation}
        \mu^*(\mathbf{x})=[H_{x_1}^{-1}(1-H_{x_2}(x_2)), H_{x_2}^{-1}(1-H_{x_1}(x_2)). \label{eq: sortingadd}
    \end{equation}
When production is additive, any sorting pattern produces the same total output. Therefore, the only aspect of sorting that matters for the social planner (and thus also in equilibrium) is the social comparisons it induces.
The welfare-maximising sorting induces then negative and assortative matching between a worker's skill and the strength of their co-worker's relative concerns: A match between a high-skill worker with strong relative concerns and a low-skill worker with weak relative concerns allows the high-skill worker to enjoy his high status, without imposing much of a loss on the low-skill worker. High-skill workers with weak relative concerns and low-skill workers with strong relative concerns self-match.\looseness=-1

\paragraph{Binary Skills}\label{par: sortingbinary}  Under Assumption \ref{item: binary}, $x_1 \in \{l, h\}$ where $h> l$, and thus $L(\mathbf{x}^k; h, l)=(F(x_1^k,h)-F(x_1^k, l))(1+2x_2^k)^{-1}$. Denote the distribution of $x_2$ conditional on $x_1$ by $G_{x_1}$; it follows directly from the definition of $v_1(\mathbf{x})$ that
\begin{equation}\label{eq: v2binary}
    v_1(\mathbf{x})=\sum_{j \in \{l, h\}} 0.5\left[1-G_{j}\left((x_2+0.5)\frac {F(j, h)-F(j, l)}{F(x_1,h)-F(x_1, l) }-0.5\right) \right].
\end{equation} By Theorem \ref{theo: dualprimal} any worker with $v_1(\mathbf{x}^k) > (<) 0.5$ matches a co-worker of high (low) skill; a rearrangement of  \eqref{eq: v2binary} yields then that in equilibrium high-skill workers with $x_2 \geq G_h^{-1}(\bar{y})$ match low-skill workers with $x_2 \leq G_l^{-1}(\bar{y})$ and all remaining workers self-match, where $\bar{y}=1$ if $a_F < T_h(0)/T_l(0)$, $\bar{y}=0$ if $a_F>T_h(1)/T_l(1)$, and $\bar{y}$ solves $a_F=T_h(\bar{y})/T_l(\bar{y})$ otherwise, with
\[  a_F \equiv \frac{F(h, h)-F(h, l)}{F(h, l)-F(l, l)}, \qquad
    T_h(y)=G_h^{-1}\left(y  \right)+0.5, \qquad
    T_l(y)=G_l^{-1}\left(1-y \right)+0.5.
\]

With supermodular (submodular) production output maximisation requires positive (negative) and assortative sorting in skills. However, the need to maximise production needs to be traded-off against the desire to match high-skill workers with workers that care little about social comparisons.
The outcome of this trade-off depends in general on (a) how strong the supermodularity (submodularity) of the production function is and (b) how strong are the relative concerns of high-skill workers compared to low-skill workers. High-skill workers with very strong relative concerns always match low-skill workers with very weak relative concerns. However, the definition of ``strong'' or ``weak'' relative concerns depends very much on the complementarity between workers of the same skill, as captured by $a_F$.  In general, the stronger the complementarity, the larger the difference in the strength of relative concerns needs to be to warrant a match between a high- and a low-skill worker.

\paragraph{Multiplicative Production and Log-Elliptically Distributed Traits}\label{par: sortingmult} Under Assumption \ref{item: mult}, $L(\mathbf{x}^k, x_1', x_1)=(x_1^c-(x_1')^c)(x_1^k)^c(1+2x_2^k)^{-1}$; thus $v_1(\mathbf{x}^k)$ is simply equal to worker's $\mathbf{x}^k$ rank in the distribution of $\bar{v}_2\equiv c \ln x_1 - \ln(1+2x_2)$. As any linear transformation of an elliptically distributed random variable remains elliptically distributed with the same generator function $\phi$ \citep[e.g. Theorem 2.16 in ][]{Fang1990}, it follows that $(x_1, \bar{v}_2) \sim EC_2(\mathbf{A} \Delta, \mathbf{A} \Omega \mathbf{A}^T; \phi)$, where $A\equiv\left(\begin{smallmatrix} 1 & 0 \\ c & -1 \end{smallmatrix}\right)$. Denote the  square root of the ratio of variances of $\bar{v}_2$ and $x_1$ by $r$:  it follows then from Theorem \ref{theo: dualprimal} and some linear algebra that:
\begin{equation}\label{eq: sortingmult}\mu_1^*(\mathbf{x})=\left(\frac{x_1^c}{1+2x_2}\right)^{1/r}e^{\delta_1(1-\frac{c}{r})+\frac{\delta_2}{r}} , \quad \mu_2^*(\mathbf{x})=\frac{e^{(1+\frac{c}{r})[\delta_1(1-\frac{c}{r})+\delta_2]}}{2}\left(\frac{x_1^{c-\frac{r^2}{c}}}{1+2x_2}\right)^{c/r}-0.5.\end{equation}

As in the binary case, if $c>0$ ($c<0$), and thus if production is supermodular (submodular), then the equilibrium sorting optimally trades-off the need to match high-skill workers to high- (low-) skill co-workers, with the need to match high-skill workers to workers who have weak relative concerns.
There is, however, one noteworthy change compared to the binary skills case: When skills are elliptically distributed, then the measure of self-matching workers is 0 as long as $\Omega$ is of full rank. This is because, instead of self-matching, a worker with very high skill and weak (but not very weak) relative concerns can now match a worker with high (but not very high) skill and very weak relative concerns.
\subsubsection{Sorting in Skills}\label{sec: sorting}
The preceding discussion made it very clear that the direction and the strength of sorting in the skill dimension depends not only on the properties of the production function, but also on the distribution of relative concerns. Indeed, in the additive case sorting in skills is solely determined by the interdependence between skill and the strength of relative concerns---for example, sorting is perfectly positive and assortative in skills if and only if $x_2$ decreases deterministically in $x_1$. More generally, whether workers of high skill are complements or substitutes matters for the direction and the strength of sorting in skill as well, just as it does in the standard sorting model---however, even then the distribution of relative concerns continues to play a (possibly dominant) role.

\begin{prop}\label{prop: sorting}
Consider an economy $(F, H)$ which satisfies Assumption \ref{ass: suff}. For any $\rho \in [-1, 1]$  there exists a distribution of traits $H$, such that (a) economy $(F, \tilde{H})$ satisfies Assumption \ref{ass: suff}, (b) the marginal distribution of skill is the same under $H$ and $\tilde{H}$ ($H_{x_1}=\tilde{H}_{x_1}$), and (c) $\text{Corr}(X_1, \mu^{*}_1(X_1, X_2))=\rho$ for every equilibrium sorting function $\mu^*$ of economy $(F, \tilde{H})$.
\end{prop}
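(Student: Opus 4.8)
The plan is to use Theorem~\ref{theo: exuniq} to turn the claim into a question about the \emph{dependence structure} of $\tilde H$, and then to slide that dependence continuously between ``everyone self-matches'' and ``maximal anti-sorting,'' filling in the intermediate values with the intermediate value theorem.

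\textbf{Reduction.} By Theorem~\ref{theo: exuniq}, for any economy $(F,\tilde H)$ satisfying Assumptions~\ref{ass: common}--\ref{ass: copula}, every equilibrium sorting $\mu^*$ has $\Pr(\mu^*_1(\mathbf{x})\le x)=\tilde H_{x_1}(x)$ and pairs a worker's skill comonotonically with the co-worker's index $v_1$, so $\mu^*_1(\mathbf{x})=\tilde H_{x_1}^{-1}(v_1(\mathbf{x}))$ almost surely. Hence the joint law of $\big(X_1,\mu^*_1(X_1,X_2)\big)$ — and in particular $\text{Corr}\big(X_1,\mu^*_1(X_1,X_2)\big)$ — is the same for every equilibrium sorting and is determined by $\tilde H$ alone; this disposes of the ``for every $\mu^*$'' clause. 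It therefore suffices to produce, for each $\rho$, a distribution $\tilde H$ with $\tilde H_{x_1}=H_{x_1}$, respecting the standing regularity conditions and the same part of Assumption~\ref{ass: suff} that $(F,H)$ satisfies, whose induced copula of $(X_1,v_1)$ makes $\text{Corr}\big(X_1,H_{x_1}^{-1}(v_1)\big)=\rho$.

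\textbf{An interpolating family.} I would build a one-parameter family $\{\tilde H_t\}_{t\in[0,1]}$, all with the fixed skill marginal and all inside the relevant part of Assumption~\ref{ass: suff}, for which $t\mapsto\text{Corr}\big(X_1,H_{x_1}^{-1}(v_1)\big)$ is continuous with the two extreme copulas as endpoints: when $(X_1,v_1)$ is comonotone, $v_1=H_{x_1}(X_1)$, so $\mu^*_1=X_1$ (universal self-matching) and the correlation is $1$; when $(X_1,v_1)$ is countermonotone, $\mu^*_1=H_{x_1}^{-1}\big(1-H_{x_1}(X_1)\big)$ is the Fréchet--Hoeffding lower coupling of $H_{x_1}$ with itself, whose correlation is $-1$. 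Concretely: under part~\ref{item: binary} let $G_l,G_h$ depend on $t$ so that the cutoff $\bar y$ from the binary-skills characterisation in Section~\ref{sec: sorting} slides from $1$ down to $0$; then the mass $q$ of self-matching high-skill workers slides from $1$ to $0$, and a one-line computation gives $\text{Corr}(X_1,\mu^*_1)=2q-1$, so all of $[-1,1]$ is hit (the equal-shares skill law is symmetric, so the endpoints are genuinely $\pm1$). Under part~\ref{item: additive} take the copula of $(x_1,1/(1+2x_2))$ to be, say, a Gaussian copula with correlation parameter $\theta_t\in[-1,1]$: it is exchangeable, so part~\ref{item: additive} holds, and since $v_1=1-H_{x_2}(x_2)$ is an increasing transform of $1/(1+2x_2)$ it equals the copula of $(X_1,v_1)$; the correlation is continuous in $\theta_t$ and runs from the countermonotone value up to $1$. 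Under part~\ref{item: mult} keep the generator $\phi$ and $\mathrm{Var}(\ln X_1)$ fixed and vary the off-diagonal entry of $\Omega$ (and, if needed, the scale of $\ln(1+2x_2)$): since $\ln\mu^*_1$ is affine in $\bar v_2=c\ln x_1-\ln(1+2x_2)$ and $(\ln X_1,\bar v_2)$ is jointly elliptical with $\mathrm{Var}(\bar v_2)=\mathrm{Var}(\ln X_1)$, the vector $(\ln X_1,\ln\mu^*_1)$ is jointly elliptical with equal variances and a correlation sweeping $[-1,1]$, and $\text{Corr}(X_1,\mu^*_1)$ is a continuous function of it. In each case the intermediate value theorem then yields the target $\rho$.

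\textbf{Anticipated obstacle.} The delicate endpoint is $\rho=-1$. Because feasibility forces $\mu^*_1\stackrel{d}{=}X_1$, the value $-1$ requires $\mu^*_1$ to be an affine strictly decreasing function of $X_1$, which is possible only when $H_{x_1}$ is symmetric about its mean; for binary skills with equal shares this holds automatically, but in parts~\ref{item: additive} and~\ref{item: mult} it is a real restriction, so either $H_{x_1}$ must be taken symmetric or the target interval should be read as the full attainable range $\big[\,\text{Corr}_{\min}(H_{x_1}),\,1\,\big]$ — pinning down how the construction is meant to reach the left endpoint is the step I expect to need the most care. The rest is routine but not vacuous: the interpolation should use a smooth copula family (Gaussian, Frank, \dots) rather than a discrete mixture of the comonotone and countermonotone couplings, so that $\Pr(X_2\le x_2\mid x_1)$ remains absolutely continuous and $\underline{x_2}>-1/2$ is preserved, and one must check that in parts~\ref{item: additive} and~\ref{item: mult} the chosen family genuinely stays inside that part while holding $\tilde H_{x_1}=H_{x_1}$.
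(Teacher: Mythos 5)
Your proposal follows essentially the same route as the paper's own proof: reduce to the copula of $(X_1,\mu^*_1)$ via Theorem~\ref{theo: exuniq}, build a one-parameter family of trait distributions inside the relevant part of Assumption~\ref{ass: suff} whose induced copula moves continuously between the Fr\'echet--Hoeffding bounds, and invoke the intermediate value theorem; even your case-by-case constructions (Gaussian copula for \ref{item: additive}, sliding the cutoff $\bar y$ for \ref{item: binary}, varying the off-diagonal of $\Omega$ for \ref{item: mult}) coincide with the paper's.

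The one substantive difference is your ``anticipated obstacle,'' and you are right to raise it: it is a genuine issue that the paper's proof does not address. The paper asserts that the Fr\'echet--Hoeffding lower-bound copula yields $\text{Corr}(X_1,\mu^*_1)=-1$, but since feasibility forces $\mu^*_1\stackrel{d}{=}X_1$, the countermonotone coupling $\mu^*_1=H_{x_1}^{-1}(1-H_{x_1}(X_1))$ has correlation $-1$ only when $H_{x_1}$ is symmetric about its mean. This holds automatically in the binary equal-shares case, but not in general under \ref{item: additive}, and under \ref{item: mult} the paper's own computation only establishes $\text{Corr}(\ln X_1,\ln\mu^*_1)=-1$ (the elliptical, hence symmetric, log scale), which does not give $\text{Corr}(X_1,\mu^*_1)=-1$ for, say, log-normal skills. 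So the attainable range is $[\text{Corr}_{\min}(H_{x_1}),1]$ with $\text{Corr}_{\min}>-1$ for asymmetric marginals, exactly as you suspect; either the statement should be read on that range or $H_{x_1}$ must be restricted to be symmetric. Your reduction observation (that the correlation is pinned down by $\tilde H$ alone, disposing of the ``for every $\mu^*$'' clause) is also cleaner than the paper's footnote, which only makes this point for the binary case.
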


\deferred[proof: propsorting]{\paragraph*{Proof of Proposition \ref{prop: sorting}}
It suffices to show that there exists a parameterised family of joint distributions $\tilde{H}^\rho$ that (i) satisfies conditions (a) and (b) and (ii) induces a family $C^\rho$ of copulas of the joint distributions of $(X_1, \mu^*_1(X_1, X_2))$ which depends continuously on the parameter $\rho$ and nests both Frechet–Hoeffding upper and lower bounds.\footnote{In the binary skills case, this suffices because the correlation between workers' and their co-workers' skill depends on $\bar{y}$ only, which is the same for all possible equilibrium sortings.} This is because $X_1$ and $\mu^*_1(X_1, X_2)$ have the same marginal distributions, and thus the Frechet–Hoeffding upper (lower) bound copula produces    $\text{Corr}(X_1, \mu^*_1(X_1, X_2))$ equal to $1$ ($-1$).

If $(F, H)$ satisfies Assumption \ref{item: additive}, then $\text{Corr}(X_1, \mu^*_1(X_1, X_2))=\text{Corr}(X_1, H_{x_1}^{-1}(1-H_{x_2}(X_2)))$, and the result follows by setting $\tilde{H}^\rho$  to be the  family of Gaussian copulas. If $(F, H)$ satisfies Assumption \ref{item: binary}, let us set $\tilde{H}^\rho$ so that $T_h(y)=1+y+(1-\rho)a_F$ and $T_l(y)=2-y+(1+\rho)/a_F$, which is clearly continuous in $\rho$ and,  by the discussion in \ref{par: sortingbinary}, attains the Frechet–Hoeffding lower (upper) bound for $\rho=-1$ ($\rho=1$).

Finally, if $(F, H$ satisfies Assumption \ref{item: mult}, then set $\tilde{H}^\rho$  to be $EC_2(\Delta, \Omega(\rho); \phi)$ distributed, where $\omega(\rho)_{11}=\omega_{11}$, $\omega(\rho)_{22}=4c^2\omega_{11}$, $\omega(\rho)_{12}=\omega(\rho)_{21}=\rho \sqrt{\omega(\rho)_{11}\omega(\rho)_{22}}$, and $\rho \in [-1, 1]$. It follows from the proof of Proposition \ref{prop: assumptions} that  $\ln X_1, \ln \mu^*_1(X_1, X_2)$ is  $EC_2$ distributed, with both marginals equal to $H_{x_1}$ and
$$\text{Corr}(\ln X_1, \ln \mu^*_1(X_1, X_2))=\text{sgn}(c)\frac{1-2 \rho \text{sgn}(c)}{\sqrt{(1-2\text{sgn}(c))^2+4\text{sgn}(c)(1-\rho)}}.$$ Therefore, the copula of $X_1, \mu^*_1(X_1, X_2)$ depends continuously on $\rho$, and for $\rho=1$ ($\rho=-1$) $\text{Corr}(\ln X_1, \ln \mu^*_1(X_1, X_2))=-1 (=1)$ and thus the copula of $X_1, \mu^*_1(X_1, X_2)$ reaches the Frechet–Hoeffding lower (upper) bound.
}

Proposition \ref{prop: sorting} implies that we can fix the production side of the economy---that is, the production function and the marginal distribution of skill---and yet produce any degree of sorting in skill in equilibrium, simply by altering the preference structure of the economy.  This means, in particular, that the strongly positive empirical correlation between co-workers' skills \citep[see, e.g.][]{Freund2022} is consistent with a strictly submodular production function.
The intuition is straightforward. Suppose that workers' skill and relative concerns are perfectly negatively correlated, which is the case, for example, if preferences are inequity aversion equivalent. A high-skill worker faces then a trade-off between maximizing their own wage by matching a low-skill worker, and minimising the within-firm wage differential by self-matching. If the relative concerns of high-skill workers are very weak compared to low-skill workers, then the welfare gain from self-matching outweighs the welfare loss stemming from the loss of output, and  positive assortative matching in the skill dimension prevails. \looseness=-1

\subsection{Equilibrium Payoffs and Wages}

Let us start the derivation of equilibrium payoffs by rewriting \eqref{eq: max} as

\[\tilde{u}^*(\mathbf{x}^k)= \max_{\mathbf{x}^j \in I_{\mathbf{x}}}\Pi(\mathbf{x}^j, \mathbf{x}^k)-\tilde{u}(\mathbf{x}^j),\]
so that  \eqref{eq: Pi} and the Envelope Theorem imply
\[\deriv{x_2}\tilde{u}^*(\mathbf{x})=-2F(x_1, \mu^*_1(\mathbf{x}))(1+2x_2)^{-2}.\]

Therefore, the equilibrium payoff function must satisfy
\begin{equation}\label{eq: eqmutility}
    u^*(\mathbf{x})=0.5(1+2x_2)\left(\tilde{u}^*(x_1, x_2^*)+2\int_{x_2^*}^{x_2}F(x_1, \mu^*_1(x_1, s)) \, \mathrm{d} (1+2s)^{-1}\right),
\end{equation}
Thus, as long as for every $x_1$ there exists some $x_2^*$ for which  $\tilde{u}^*(x_1, x_2^*)$ can be determined, we would be able to derive $u^*(\mathbf{x})$. %
The obvious candidate for such $(x_1, x_2^*)$ are workers who self-match in equilibrium, that is match with a co-worker of the same skill. As $\mu^*(\mu^*(\mathbf{x}))=\mathbf{x}$, co-workers of the same skill must split output equally---otherwise one of them could do strictly better by matching a worker of identical type, rather than just skill---and thus
the transformed utility of the self-matching worker $(x_1, x_2^*(x_1)$ equals
$F(x_1, x_1)/(1+2x_2^*(x_1))$. Therefore, the equilibrium utility function can be readily derived from  \ref{eq: eqmutility} as long as for every $x_1$ there exists some $x_2^*(x_1)$ such that the worker $(x_1, x_2^*(x_2))$ self-matches. While I will shy away from fully characterising the necessary and sufficient conditions for this to occur, it follows from \ref{par: sortingadd}-\ref{par: sortingmult} that this is generically the case as long as Assumption \ref{ass: suff} is satisfied.\footnote{This is  the case as long as the sorting function is not perfectly negative and assortative, that is, as long as $\bar{y}<1$ under Assumption \ref{item: binary} and $x_1, \bar{v}_2$ are not perfectly negatively correlated under Assumptions \ref{item: additive} and \ref{item: mult}.\label{foot: generic}}\looseness=-1
\begin{prop}[Equilibrium Payoffs and Wages]\label{prop: utilandwage}
    If Assumptions \ref{ass: common} and \ref{ass: copula} are satisfied, and $\mu^*$ is such that for every $x_1$ there exists a $x_2^*(x_1)$ for which $\mu_1(x_1, x_2^*(x_1))=x_1$,
     then the equilibrium payoff function and wage functions $u^*, w^*$ are as follows:
     \begin{IEEEeqnarray}{rCl}
    u^*(\mathbf{x})&=&0.5F(\mathbf{x_1})+\int^{x_2}_{x_2^*(x_1)}(s-x_2)(1+2s)^{-1} \mathrm{d} F(x_1,\mu^*_1(x_1, s)) \label{eq: eqmutility2} \\
    w^*(\mathbf{x})&=&0.5F(\mathbf{x_1})+\int^{x_2}_{x_2^*(x_1)}s(1+2s)^{-1} \mathrm{d} F(x_1,\mu^*_1(x_1, s)). \label{eq: wage2}
\end{IEEEeqnarray}
Here, $\mathbf{x_1}$ denotes $(x_1, x_1)$.
\end{prop}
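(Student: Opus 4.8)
The plan is to work throughout with the transferable-utility representation of Section~\ref{sec: TUR}: by Theorem~\ref{theo: dualprimal} the equilibrium re-scaled payoff $\tilde u^*$ solves the dual of the planner's problem, so it satisfies the feasibility constraints $\tilde u^*(\mathbf{x}^k)+\tilde u^*(\mathbf{x}^j)\ge\Pi(\mathbf{x}^j,\mathbf{x}^k)$ with equality whenever $\mu^*(\mathbf{x}^k)=\mathbf{x}^j$, and also the competitive condition \eqref{eq: max}. The proof then has four steps: (i) pin down $\tilde u^*$ at the self-match level $x_2^*(x_1)$; (ii) integrate the envelope ODE in $x_2$ starting from that level to obtain \eqref{eq: eqmutility}; (iii) re-arrange \eqref{eq: eqmutility} into \eqref{eq: eqmutility2}; (iv) recover the wage from the definition of $U$ and simplify into \eqref{eq: wage2}.

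For step (i), I would first note that evaluating \eqref{eq: max} at $\mathbf{x}^j=\mathbf{x}^k$ and using the closed form of $\psi$ from Section~\ref{sec: ITU} gives $\psi(\mathbf{x}^k,\mathbf{x}^k,u^*(\mathbf{x}^k))=F(x_1^k,x_1^k)-u^*(\mathbf{x}^k)$, hence $u^*(\mathbf{x})\ge\tfrac12 F(x_1,x_1)$ for \emph{every} worker, i.e.\ $\tilde u^*(\mathbf{x})\ge F(x_1,x_1)/(1+2x_2)$. Now take a self-matching worker $(x_1,x_2^*(x_1))$ and its co-worker $\mu^*(x_1,x_2^*(x_1))=(x_1,x_2')$, which has the same skill $x_1$; complementary slackness gives $\tilde u^*(x_1,x_2^*(x_1))+\tilde u^*(x_1,x_2')=\Pi=F(x_1,x_1)\big((1+2x_2^*(x_1))^{-1}+(1+2x_2')^{-1}\big)$, and since each summand is bounded below by the corresponding term on the right, both lower bounds must bind. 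Therefore $\tilde u^*(x_1,x_2^*(x_1))=F(x_1,x_1)/(1+2x_2^*(x_1))$, equivalently the self-matcher's wage is $\tfrac12 F(x_1,x_1)$.

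For steps (ii)--(iii): the Envelope Theorem applied to $\tilde u^*(\mathbf{x}^k)=\max_{\mathbf{x}^j}\Pi(\mathbf{x}^j,\mathbf{x}^k)-\tilde u^*(\mathbf{x}^j)$ together with \eqref{eq: Pi} gives $\partial_{x_2}\tilde u^*(\mathbf{x})=-2F(x_1,\mu_1^*(\mathbf{x}))(1+2x_2)^{-2}=F(x_1,\mu_1^*(\mathbf{x}))\,\tderiv{x_2}(1+2x_2)^{-1}$, as already recorded before the proposition; differentiating is legitimate because $x_2\mapsto\Pi(\mathbf{x}^j,(x_1,x_2))$ is convex for each fixed $\mathbf{x}^j,x_1$, so $\tilde u^*(x_1,\cdot)$ is convex, hence locally Lipschitz and absolutely continuous. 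Integrating from $x_2^*(x_1)$ to $x_2$, substituting the value from step (i), and multiplying by $0.5(1+2x_2)$ produces exactly \eqref{eq: eqmutility}. To pass to \eqref{eq: eqmutility2}, use the fundamental theorem of calculus $F(x_1,\mu_1^*(\mathbf{x}))=F(x_1,x_1)+\int_{x_2^*(x_1)}^{x_2}\mathrm d F(x_1,\mu_1^*(x_1,s))$ — valid because, by Theorem~\ref{theo: exuniq}, $\mu_1^*(x_1,\cdot)=H_{x_1}^{-1}(v_1(x_1,\cdot))$ is monotone in $x_2$ and hence of bounded variation — and integrate the term $\int_{x_2^*(x_1)}^{x_2}F(x_1,\mu_1^*(x_1,s))\,\mathrm d(1+2s)^{-1}$ by parts; the boundary contribution at $x_2^*(x_1)$, which equals $F(x_1,x_1)(1+2x_2^*(x_1))^{-1}$, cancels the $\tilde u^*(x_1,x_2^*(x_1))$ term, and collecting the coefficient of $\mathrm d F(x_1,\mu_1^*(x_1,s))$ leaves the integrand $\tfrac{0.5(1+2s)-0.5(1+2x_2)}{1+2s}=\tfrac{s-x_2}{1+2s}$, i.e.\ \eqref{eq: eqmutility2}.

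Finally, step (iv): since the two co-workers split $F(x_1,\mu_1^*(\mathbf{x}))$, the definition \eqref{eq: utildefinition} gives $u^*(\mathbf{x})=(1+2x_2)w^*(\mathbf{x})-x_2 F(x_1,\mu_1^*(\mathbf{x}))$; solving for $w^*$, substituting \eqref{eq: eqmutility2}, expanding $F(x_1,\mu_1^*(\mathbf{x}))$ once more via the fundamental theorem of calculus, and simplifying collapses the integrand to $\tfrac{s}{1+2s}$, which is \eqref{eq: wage2}. The main obstacle is step (i): a priori a worker matched to a co-worker of identical skill but different preference intensity need not receive half the output, so the even split — and hence the anchoring value $\tilde u^*(x_1,x_2^*(x_1))$ — must be argued, not assumed; the route above does this by pairing the universal clone-deviation lower bound with dual feasibility and complementary slackness. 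Everything after that is bookkeeping, the only remaining care being the regularity remarks (convexity of $\tilde u^*$ in $x_2$, bounded variation of $\mu_1^*(x_1,\cdot)$) needed to justify the envelope step and the Stieltjes manipulations.
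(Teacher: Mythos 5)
Your proposal is correct and follows essentially the same route as the paper: anchor the transformed payoff at the self-matching type $\tilde u^*(x_1,x_2^*(x_1))=F(x_1,x_1)/(1+2x_2^*(x_1))$, integrate the envelope condition in $x_2$, integrate by parts, and recover wages from the accounting identity $u^*=(1+2x_2)w^*-x_2F(x_1,\mu_1^*(\mathbf{x}))$. The only point of departure is step (i), where you replace the paper's informal ``deviate to a clone of identical type'' justification of the even split with dual feasibility plus complementary slackness---a cleaner formalization of the same idea---and you additionally supply the regularity remarks (absolute continuity of $\tilde u^*(x_1,\cdot)$, bounded variation of $\mu_1^*(x_1,\cdot)$) that the paper leaves implicit.
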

\begin{proof}
    \eqref{eq: eqmutility2} follows from substituting $\tilde{u}^*(x_1, x_2^*)=F(x_1, x_1)/(1+2x_2^*)$ and \eqref{eq: sorting} into  \eqref{eq: eqmutility}, integration by parts and some rearranging. \eqref{eq: wage2} follows then from \eqref{eq: eqmutility2} and
    \begin{equation}
        w^*(\mathbf{x})=0.5\left ((2u^*(\mathbf{x}))/(1+2x_2)+(1-1/(1+2 x_2))F(x_1, \mu_1^*(\mathbf{x}))\right), \label{eq: wageutility}
    \end{equation}
    which itself is implied by  \eqref{eq: utildefinition} and $F(x_1, \mu_1^*(\mathbf{x}))=w(\mathbf{x})+w(\mu^*(\mathbf{x})$.
\end{proof}

Let me draw your attention to two properties of the equilibrium payoff and wage functions.

\paragraph{Wages and Relative Concerns} Keeping skill constant, workers with stronger relative concerns earn lower wages: as $v_1(\mathbf{x})$ decreases in $x_2$, so does $\mu_1^*(\mathbf{x})$ and thus
\[w(x_1, x_2')-w(x_1, x_2)= \int^{x_2'}_{x_2}s(1+2s)^{-1} \mathrm{d} F(x_1,\mu^*_1(x_1, s))< 0,\]
as long as $x_2'>x_2>0$. This implies that workers with strong and positive relative concerns earn lower wages than workers with weak (but still positive!) relative concerns. The reason for this, perhaps, slightly counter-intuitive result is that \emph{there is no effort provision in this model} and hence ones \emph{relative} wage can be increased only by matching a less-skilled (and thus lower earning!) co-worker.
Alas, as production increases in skill, matching a less skilled co-worker comes at the cost of decreasing the worker's absolute wage.

\paragraph{A Trickle-Down Effect?} Proposition \ref{prop: utilandwage} suggests the presence of a \emph{trickle-down effect}. This effect is clearest when production is additive (Assumption \ref{item: additive}).
With binary skill the inverse function $z$ becomes $z(x_1, v_1)=H_{x_2}^{-1}(1-v_1)$.  Denote $H_{x_2}^{-1}(H_{x_1}(x_1))$ by $L(x_1)$, then \eqref{eq: sortingadd}, \eqref{eq: eqmutility2} and \eqref{eq: wage2} and integration by substitution yield:\looseness=-1
\begin{IEEEeqnarray}{rCl}
   u^*(\mathbf{x})&=&0.5F(\mathbf{x_1})+\int_{x_1}^{L^{-1}(x_2)}(L(s)-x_2)K'(s)(1+2L(s))^{-1} \label{eq: payoffadd} \\
    w^*(\mathbf{x})&=&0.5F(\mathbf{x_1})+\int_{x_1}^{L^{-1}(x_2)}L(s)K'(s)(1+2L(s))^{-1} \, \mathrm{d} s. \label{eq: wageadd}
\end{IEEEeqnarray}
 Consider a change in $K$ that increases $K'(x_1)$ for all $x_1$ above some cutoff $\hat{x}$, but leaves it unchanged otherwise. Note that $L(\cdot)$ does not depend on $K(\cdot)$; moreover, because $L$ is decreasing, $L(s)>x_2$ for $s \in [x_1, L^{-1}(x_1))$. Therefore, any such change in $K$ raises the payoffs of  low-skilled workers with weak relative concerns, as those workers have co-workers with skill above the cutoff. Importantly, however, low-skill workers with strong relative concerns are matched to co-workers with skill below the cutoff, and thus do not see any increase in payoff. Thus, in the additive production case, the gains from increased productivity trickle down, but only to those workers who do not have strong relative concerns.\looseness=-1

If production is not additive, the impact of changes in the productivity of high-skill workers on low-skill workers' wages and payoffs depends on its impact on the sorting pattern. This is easiest to see in the case of binary skills, where \eqref{eq: eqmutility2} and \eqref{eq: wage2} simplify to
\begin{IEEEeqnarray}{rCl}
    u^*(\mathbf{x})&=& 0.5\left(F(\mathbf{x_1})+\left(F(\mu_1^*(\mathbf{x}), x_1)-F(\mathbf{x_1})\right)\left(1-(1+2x_2)/(2T_{x_1}(\bar{y}))\right)\right) \label{eq: payoffbinary} \\
    w^*(\mathbf{x})&=&0.5\left(F(\mathbf{x_1})+\left(F(\mu_1^*(\mathbf{x}), x_1)-F(\mathbf{x_1})\right)\left(1-1/(2T_{x_1}(\bar{y}))\right)\right). \label{eq: wagebinary}
\end{IEEEeqnarray}
Here, $\mu_1^*(\mathbf{x})$ is as specified in \ref{par: sortingbinary}. Consider an increase in $h$, which necessitates an increase in $F(h, l)-F(\mathbf{l})$, but has an ambiguous impact on $F(\mathbf{h})-F(h, l)$ and $a_F$. If $a_F$ decreases or remains constant, then $\bar{y}$ must decrease and there are \emph{more} low-skill workers matched with high-skill workers than before. As the self-match payoff is unchanged for low-skill workers---and any worker has always the option of receiving the self-match payoff---it follows from revealed preference that all low-skill workers who change their matches are better off. Similarly, by inspection of \eqref{eq: payoffbinary}, if $a_F$ decreases than an increase in $h$ must make any low-skill worker who was remained matched with a high-skill worker better off.

If, instead, the increase in $h$ raises $a_F$, then an improvement in $h$ makes at least some low-skill workers worse off.\footnote{An example of such an increase in $h$ would be skill-biased technological change (Definition \ref{defi: sbtc} below) under submodular production function.} An increase in $a_F$ would increase $\bar{y}$ and thus also the number of self-matching low-skill workers; any newly self-matched low-skill worker will be worse off by revealed preference. Indeed, if the increase in $H$ increased only $F(\mathbf{h})$ with no or very weak effect on $F(h, l)$, then all low-skill workers would become worse off, and the trickle-down effect would disappear.

Overall,  any increase in high-skill workers' productivity that weakens sorting in skills benefits those low-skill workers who end up matched to high-skill workers (leaving other low-skill workers unaffected). Importantly, note that if $\underline{x}_2 \geq 0$ then this trickle-down effect increases not only the payoffs but also wages of the affected low-skill workers. To see why, note that $U(w^k, \mathbf{x}^, x_1^j)=(1+2x_2^k)w^k-x_2^kF(x_1^k, x_1^j)$ and thus without a change in wages, an improvement in the co-worker's productivity makes the worker worse off. Therefore, for the low-skill workers to become better off, their wage must increase.\looseness=-1

\subsubsection{Wage Inequality}\label{sec: inequality}
Intuitively, as long as all workers have positive relative concerns, wage inequality should be lower in the presence of relative concerns than if all workers received the self-match wage, simply because any low-skill workers who are not self-matched must earn higher---and any high-skill workers lower---wages than the self-match wage. Proposition \ref{prop: var} confirms this intuition for any specification of the model which satisfies Assumption \ref{ass: suff}---with the obvious caveat that under Assumption \ref{item: mult}, $\bar{x}_2=-0.5$, and thus the condition that $\bar{x}_2 \geq 0$ is replaced by the condition that there are very few workers with $x_2<0$.\footnote{This intuition could fail if skills were binary and $p>0.5$, because the presence of relative concerns would increase the inequality between the two sub-economies discussed in footnote \ref{foot: pneqhalf}.}

\begin{prop}\label{prop: var}
If either (i) $\underline{x}_2 \geq 0$ and one of Assumptions \ref{item: additive} or \ref{item: additive}  is satisfied or (ii) $H_{x_2}(0)$ is sufficiently close to 0 and Assumption \ref{item: mult} is satisfied, then $\text{Var}(w_S)\geq\text{Var}(w^*)$.\looseness=-1
\end{prop}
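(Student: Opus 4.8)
The plan is to compare the equilibrium wage $w^*$ with the self-match wage $w_S(\mathbf{x}) \equiv F(x_1,x_1)/2$ by showing that $w^*$ is a \emph{mean-preserving contraction} of $w_S$, from which $\text{Var}(w^*)\le\text{Var}(w_S)$ follows immediately. The first step is to record that the two random variables have the same mean: aggregate output is conserved in the additive case (\ref{item: additive}) trivially, and more generally $E_H[w^*(\mathbf{X})]=E_H[F(X_1,\mu^*_1(\mathbf{X}))]/2$ equals $E_H[F(X_1,X_1)]/2 = E_H[w_S(\mathbf{X})]$ only when total output is preserved; since this need \emph{not} hold for non-additive $F$, the correct statement to prove is that $w^*(\mathbf{x})-w_S(\mathbf{x})$ has the sign pattern of a contraction around the \emph{conditional-on-skill} structure. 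Concretely, from \eqref{eq: wage2}, $w^*(\mathbf{x})-w_S(\mathbf{x})=\int_{x_2^*(x_1)}^{x_2} s(1+2s)^{-1}\,\mathrm{d}F(x_1,\mu^*_1(x_1,s))$, and I would show this integrand has a definite sign: since $\mu^*_1(x_1,s)$ is decreasing in $s$ (because $v_1$ is decreasing in $x_2$, by Theorem \ref{theo: exuniq}) and $F$ is increasing, $\mathrm{d}F(x_1,\mu^*_1(x_1,s))$ is a negative measure in $s$; together with $s(1+2s)^{-1}\ge 0$ when $s\ge 0$, this gives $w^*(\mathbf{x})\le w_S(\mathbf{x})$ when $x_2>x_2^*(x_1)$ and $w^*(\mathbf{x})\ge w_S(\mathbf{x})$ when $x_2<x_2^*(x_1)$.

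The second step is the key structural observation: \emph{among workers of a fixed skill $x_1$, the self-match wage is constant} (it equals $w_S(x_1)=F(x_1,x_1)/2$), whereas the equilibrium wage varies with $x_2$, being above $w_S(x_1)$ for low $x_2$ and below it for high $x_2$, as just established. So within each skill class, $w^*$ is a spread-out version of the degenerate $w_S$ — this pushes variance the \emph{wrong} way. The resolution is that the \emph{across-skill} dispersion shrinks: a high-skill worker with $x_2>x_2^*$ earns strictly \emph{less} than $F(x_1,x_1)/2$, while low-skill workers with small $x_2$ earn strictly \emph{more} than $F(x_1,x_1)/2$; since high-skill self-match wages exceed low-skill self-match wages, moving high earners down and low earners up compresses the overall distribution. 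I would formalise this via the variance decomposition $\text{Var}(w)=E[\text{Var}(w\mid X_1)]+\text{Var}(E[w\mid X_1])$ applied to both $w^*$ and $w_S$: the second term is zero for $w_S$; for $w^*$ I need $\text{Var}(E[w^*\mid X_1])+E[\text{Var}(w^*\mid X_1)]\le\text{Var}(w_S)=\text{Var}(E[w_S\mid X_1])$. This does not follow from decomposition alone, so instead I would construct an explicit coupling: use the equilibrium sorting to pair each worker with their self-match counterfactual and show that $w^*$ is obtained from $w_S$ by a sequence of \emph{Robin-Hood transfers} — each cross-matched high/low pair $(\mathbf{x},\mu^*(\mathbf{x}))$ splits $F(x_1^k,x_1^j)$, and I would verify that the pair of wages $(w^*(\mathbf{x}),w^*(\mu^*(\mathbf{x})))$ lies \emph{between} the pair $(F(x_1^k,x_1^k)/2, F(x_1^j,x_1^j)/2)$ in the majorization order, i.e. their sum is weakly smaller \emph{and} their spread is weakly smaller. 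For the additive case the sum is exactly equal, so only the spread claim matters and the Robin-Hood structure is exact; for the general case under \ref{item: binary} one checks the two-point majorization directly from \eqref{eq: wagebinary} using $\bar{x}_2\ge 0$ and $F(h,l)\ge \max\{F(l,l),\,\tfrac12(F(h,h)+F(l,l))\}$ — wait, this last inequality can fail under submodularity, which is exactly why the proposition's case (ii) for multiplicative/submodular $F$ is hedged with ``$H_{x_2}(0)$ sufficiently close to $0$''.

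The third step handles case (ii) — Assumption \ref{item: mult} with $\underline{x}_2=-0.5$ — by a continuity/perturbation argument. When $H_{x_2}(0)=0$ exactly (no workers with negative relative concerns), the case-(i) argument applies verbatim using the log-elliptical closed form \eqref{eq: sortingmult}; the variance $\text{Var}(w^*)$ and the boundary self-match points $x_2^*(x_1)$ depend continuously on the distribution $H$ in total variation (because \eqref{eq: sortingmult} is continuous in the parameters $(\Delta,\Omega)$ and the integrands in \eqref{eq: wage2} are bounded), so for $H_{x_2}(0)$ small enough the strict inequality $\text{Var}(w_S)>\text{Var}(w^*)$ at the boundary case degrades at worst to the weak inequality claimed.

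\textbf{Main obstacle.} The hard part is not the sign computation in Step 1 but establishing the correct \emph{two-point majorization} in Step 2 for non-additive production: one must show that $w^*$ never ``overshoots'' — i.e. a cross-matched low-skill worker is not compensated so generously that they end up earning \emph{more} than the high-skill worker's self-match wage, and symmetrically the high-skill worker is not pushed below the low-skill self-match wage. This is where the hypothesis $\underline{x}_2\ge 0$ does real work (it bounds how large the relative-concerns ``tax'' in \eqref{eq: wage2} can be), and it is the reason the proposition must restrict to $\underline{x}_2\ge 0$ (or nearly so) rather than allowing the global-status reinterpretation with negative $\tilde x_2$; I would isolate this as a lemma bounding $\lvert w^*(\mathbf{x})-w_S(x_1)\rvert$ in terms of $\lvert w_S(x_1)-w_S(\mu^*_1(\mathbf{x}))\rvert$ and prove it separately for each of the three cases in Assumption \ref{ass: suff}.
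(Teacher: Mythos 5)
Your Step 1 (the sign of $w^*-w_S$ from \eqref{eq: wage2}) is correct and is exactly how the paper starts, and your Robin--Hood/majorization idea does close the additive case, where each cross-matched pair redistributes precisely $w_S(x_1^k)+w_S(x_1^j)$ and the $\underline{x}_2\geq 0$ bound $s(1+2s)^{-1}\leq 1/2$ keeps the pair inside the interval $[w_S(\text{low}),w_S(\text{high})]$. (The paper instead notes that under \ref{item: additive} the wage function is copula-free and compares $E[w^2]$ against the Fr\'echet--Hoeffding lower-bound copula via the supermodular order; your route is a legitimate, arguably more transparent, alternative there.) But the other two cases have genuine gaps. For the binary case, cross-matching only occurs when $F$ is submodular, and then $F(h,l)>\tfrac12(F(\mathbf{h})+F(\mathbf{l}))$ \emph{holds} (you have the direction backwards: it fails under supermodularity, where everyone self-matches anyway). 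So every cross-matched pair's total wage strictly exceeds the sum of their self-match wages: the transfer is not mean-preserving, majorization in your sense fails, and a new between-group variance term appears between cross-matchers and self-matchers. The hedge you invoke to escape this ($H_{x_2}(0)$ near $0$) belongs to the multiplicative case (ii), not to the binary case, which sits in part (i) with only $\underline{x}_2\geq 0$ available. The paper resolves this by computing $\text{Var}(W)/\text{Var}(W_B)$ in closed form --- the offending term is $\bar{y}(1-\bar{y})(T_l(\bar y)-T_h(\bar y))^2$ --- and showing the ratio is $\leq 1$ exactly when $4T_l(\bar y)T_h(\bar y)\geq 1$, which $\underline{x}_2\geq 0$ guarantees. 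Nothing in your proposal substitutes for this step.

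The multiplicative case is a larger gap. Your continuity argument has no valid anchor: under \ref{item: mult} the marginal of $\ln(1+2X_2)$ is elliptical with full support, so $H_{x_2}(0)>0$ for every admissible parameterization --- there is no boundary configuration at which ``the case-(i) argument applies verbatim,'' and the case-(i) arguments are in any event specific to additive/binary structure (multiplicative $F$ with $c\neq 0$ is strictly super- or submodular, so pair sums are again not preserved). Even granting an anchor, a weak inequality at a limit point plus continuity does not propagate to a neighbourhood. The paper's proof of this case is the bulk of the work: it decomposes $\text{Var}(w)$ into between- and within-team components, disposes of the within-team comparison with your sign argument in the limit where $\text{Pr}(\min\{x_2,\mu_2^*(\mathbf{x})\}<0)$ is small (this is where ``$H_{x_2}(0)$ sufficiently close to $0$'' earns its keep), and then proves the between-team comparison for all parameter values via a convexity argument for a moment-generating-type functional $T(\alpha)$ of the elliptical generator. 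That convexity argument is the genuinely hard content of the proposition and is entirely absent from your plan.
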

\deferred[proof: propvar]{\paragraph*{Proof of Proposition \ref{prop: var}}

\textbf{Assumption \ref{item: additive}}.
Notice that, by \eqref{eq: wageadd}, (a) the wage function and the average wage in the economy depend only on the marginals of the traits distribution, but not its copula and (b) if the copula of $H$ is the Frechet–Hoeffding lower bound, then each worker receives exactly half of the production of a self-matched team, which is the benchmark wage. Thus,  wage variance is lower in my model than in the baseline as long as $ \int_{I_\mathbf{x}} w(\mathbf{x})^2 \mathrm{d} H(\mathbf{x}) < \int_{I_\mathbf{x}} w(\mathbf{x})^2 \mathrm{d} \underline{C}(H_{x_1}(x_1), H_{x_2}(x_2)),$ where $\underline{C}(\mathbf{v})\equiv \max\{v_1+v_1-1,0\}$ is the Frechet–Hoeffding lower bound copula. This is clearly true---by the definitions of the  Frechet–Hoeffding lower bound and the supermodular order---because the square function is convex and  $w(\mathbf{x})$ is additevely separable and (under the assumption that $\underline{x_2} \geq 0$) increases in both variables, and thus $w(\mathbf{x})^2$ is supermodular.

\textbf{Assumption \ref{item: binary}}.
If $\bar{y}=1$ then $w_S=w^*$ and the result is immediate. If $\bar{y}=0$ and $\underline{x}_2 \geq 0$ then all low- (high) skill workers must earn more (less) than $w_S(l)$ ($w_S(h)$) and the result follows as well.
If $\tilde{y} \in (0, 1)$, then it follows from \eqref{eq: wagebinary} and $a_F=T_h(\bar{y})/T_l(\bar{y})$ that $(T_h(\bar{y})+T_l(\bar{y})) \Delta w^n=\Delta w^S$, and  $F(h,l)-(F(\mathbf{h})+F(\mathbf{l}))/2=\Delta w^S (T_l(\tilde{y})-T_h(\tilde{y}))/(T_l(\tilde{y})-T_h(\tilde{y}))$, where $\Delta w^S$ is the difference between the high- and low-skill self-matched wages, and $\Delta w^n$ is the difference between the high- and low-skill wages of workers who cross-match. Clearly then,  $\text{Var}(W_B)=0.25 (\Delta w^S)^2$ and
\[\text{Var}(W)=\text{Var}(W_B)\frac{1-\bar{y}+(1-\bar{y})\bar{y}(T_l(\bar{y})-T_h(\bar{y}))^2+\bar{y}(T_l(\bar{y})+T_h(\bar{y}))^2}{(T_l(\bar{y})+T_h(\bar{y}))^2},\]
Elementary algebra reveals that if $4T_l(\bar{y})T_h(\bar{y})\geq 1$ then $\text{Var}(W)-\text{Var}(W_B) \leq 0$. The result follows because $(0.5+\underline{x}_2)^2 \leq T_l(\bar{y})T_h(\bar{y})$.

\textbf{Assumption \ref{item: mult}}.
Consider an arbitrary wage function $w$ and a feasible sorting function $\mu$. Variance decomposition yields
$\text{Var}(w(x_1, x_2))= \text{BWI}(w, \mu)+\text{WWI}(w, \mu),$
where
\[\text{BWI}(w, \mu)=\text{Var}\left(w(\mathbf{x})+w(\mu(\mathbf{x})\right), \qquad \text{WWI}(w, \mu)=0.25\text{E}\left(w(\mathbf{x})-w(\mu(\mathbf{x}), \mu_2(\mathbf{x}))\right)^2; \]
thus it suffices to show that $\text{WWI}(w_B, \mu^*)\geq\text{WWI}(w^*, \mu^*)$ and $\text{BWI}(w_B, \mu^*)\geq\text{BWI}(w^*, \mu^*)$.

The first part is easy. It follows directly from  \eqref{eq: wage2} that worker $\mathbf{x}$ earns more (less) than their benchmark wage if $\mu_1^*(\mathbf{x}) \geq (\leq) \mathbf{x}$ and $z(x_1, H_{x_1}(s)) \geq 0$ for all $s \in [x_1, \mu_1(\mathbf{x}]$.
Note that $z(x_1, H_{x_1}(\mu^*(\mathbf{x}))=z(x_1, v_1(\mathbf{x}))=x_2$, and $x_1=\mu^*_1(\mu_1(\mathbf{x}), \mu^*_2(\mathbf{x}))$, and thus $z(x_1, H_{x_1}(x_1))=\mu_2(x_1, x_2)$. It follows, therefore, that if $\min\{x_2, \mu^*_2(x_1, x_2)\}>0$ and $\mu^*_1(\mathbf{x}) \geq (\leq) \mathbf{x}$ then $w(\mu^*(\mathbf{x}))\leq w_B(\mu^*(\mathbf{x}))$ and $w(\mathbf{x})\geq w_B(\mathbf{x})$. Keeping all other parameters constant, if $\delta_1$ becomes arbitrarily large compared to $\omega_{11}$ then the probability of $\min\{x_2, \mu_2(x_1, x_2)\}<0$ becomes arbitrarily small, and $\text{WWI}(w_B, \mu^*)\geq\text{WWI}(w^*, \mu^*)$.

Moving on to $\text{BWI}$, we have that
\begin{align*}
  c^2\text{BWI}(w_B, \mu^*)&=0.25\text{Var}(x_1^c+\mu_1^*(\mathbf{x})^c)=0.5(\text{Var}(x_1^c)+\text{Cov}(x_1^c, \mu_1^*(\mathbf{x})^c)),  \\
  c^2\text{BWI}(w^*, \mu^*)&=\text{Var}((x_1)^{c}\mu_1^*(\mathbf{x})^c)=E((x_1^{c}\mu_1^*(\mathbf{x})^c)^2)-E(x_1^{c}\mu_1^*(\mathbf{x})^c)^2.
\end{align*}
Define the random variables $s\equiv c(\ln x_1+ \ln \mu_1^*(\mathbf{x}))$ and $z\equiv(s-2\delta_1)/\alpha$; where $\alpha^2\equiv \text{Corr}(\ln x_1, \ln \mu_1^*(\mathbf{x}))+1$. By \citep[e.g. Theorem 2.16 in ][]{Fang1990}, $z \sim  EC_1(0, c^2\omega_{11}; \phi)$; denote the cdf of $z$ by $G$.  Next, let us write
\begin{IEEEeqnarray*}{rCl}
    c^2(\text{BWI}(w^*, \mu^*)-\text{BWI}(w_S, \mu^*))&=& e^{-4\delta_1} (\underbrace{0.5E(z^2)-E(z)^2}_{\equiv T(\alpha)})-A.
\end{IEEEeqnarray*}
Here, $A=0.5\text{Var}(x_1^c)+0.5\text{E}(x_1^c)\text{E}(\mu_1^*(\mathbf{x})^c))$. It is easy to see that if $\alpha=2$, then
$\text{BWI}(w^*, \mu^*)=\text{BWI}(w_S, \mu^*)$, and if $\alpha=0$, then $\text{BWI}(w^*, \mu^*) \leq \text{BWI}(w_S, \mu^*)$.\footnote{If $\alpha=0$, then $\text{Corr}(\ln x_1, \ln \mu_1^*(\mathbf{x}))$ and the variance of $s$ is 0.} Thus, by standard arguments, if $T(\cdot)$ is convex then $\text{BWI}(w^*, \mu^*)-\text{BWI}(w_S, \mu^*)<0$.

\textit{Proof that $T(\cdot)$ is convex.} Let us start by rewriting $T(\alpha)$
\begin{IEEEeqnarray*}{rCl}
    T(\alpha)&=&0.5\int_{-\infty}^{\infty} e^{2(\alpha z)}\mathrm{d} G(z)-\left(\int_{-\infty}^{\infty} e^{\alpha z}\mathrm{d} z\right)^2 \\
    &=&\int_{-\infty}^{\infty} e^{\alpha z}\left(\int_{-\infty}^{r} 0.5e^{\alpha z} -e^{\alpha r} \, \mathrm{d} G(r)+\int_r^{\infty} 0.5e^{\alpha z} -e^{\alpha r} \, \mathrm{d} G(r)\right) \, \mathrm{d} G(z)\\
      &=& \int_{-\infty}^{\infty} \int_r^{\infty} e^{\alpha r}(0.5e^{\alpha r} -e^{\alpha z})+ e^{\alpha z}(0.5e^{\alpha z} -e^{\alpha r}) \, \mathrm{d} G(r) \, \mathrm{d} G(z)\\
      &=&0.5\int_{-\infty}^{\infty} \int_{-\infty}^{\infty} 0.5e^{2\alpha z} -2e^{\alpha(r+z)}+0.5 e^{2\alpha r} \, \mathrm{d} G(r) \, \mathrm{d} G(z).
\end{IEEEeqnarray*}
Denote $0.5e^{2\alpha z} -2e^{\alpha(r+z)}+0.5 e^{2\alpha r}$ by $p(z, r; \alpha)$. Note that (a) $G(z)=1-G(z)$, because any elliptical distribution is symmetric and (b) $p(z, r;\alpha)=p(r, z;\alpha)$; we can thus write:
\[T(\alpha)=\int_0^{\infty}\int_0^{\infty} \underbrace{p(z, r; \alpha)+p(-z, r; \alpha)+p( z, -r; \alpha)+p(-z, -r; \alpha)}_{\equiv P(z, r; \alpha)} \, \mathrm{d} G(r) \, \mathrm{d} G(z).\]
Clearly, it suffices thus to show that $\derivsec{\alpha}P(z, r; \cdot) \geq 0$ for any $(z, r) \in \mathbb{R}^2_{+}$; as $P(z, r; \alpha)$ is symmetric in $z, r$, we can assume, wlog, that $z>r$.
First, note that
\begin{IEEEeqnarray*}{rCl}
\derivsec{\alpha}p(z, r; \alpha)&=&2(z^2e^{2 \alpha z}+r^2e^{2 \alpha r}-(r+z)^2e^{\alpha(r+z)}),\\
\deriv{z}\derivsec{\alpha}p(z, r; \alpha)&=&4(\underbrace{ze^{2 \alpha z}-(r+z)e^{\alpha(r+z)}}_{l(z,r; \alpha)})\\
&&+2\alpha(\underbrace{2z^2e^{2 \alpha z}-(r+z)^2e^{\alpha(r+z)}}_{k(z, r; \alpha)},\\
\derivsec{z}\derivsec{\alpha}p(z, r; \alpha)&=&4(e^{2 \alpha z}-e^{\alpha(r+z)})+2\alpha \deriv{z}\derivsec{\alpha}p(z, r; \alpha)\\&&+2 \alpha(4ze^{2 \alpha z}+\alpha(r+z)^2e^{\alpha(r+z)}).
\end{IEEEeqnarray*}

Next, note that
\begin{align*}
    &\deriv{z}\derivsec{\alpha}P(z, r; \alpha) \geq 0 \Rightarrow \derivsec{z}\derivsec{\alpha}P(z, r; \alpha) > 0,\\
    &K(z, r ;\alpha)\equiv k(z, r; \alpha)+k(-z, r; \alpha)+k( z, -r; \alpha)+k(-z, -r; \alpha) \geq \derivsec{\alpha}P(z, r; \alpha),
\end{align*}
because $z(e^{2 \alpha z}-e^{-2 \alpha z}>0$ and
\begin{equation*}\label{eq: diff}e^{2 \alpha z}-e^{\alpha(r+z)}+e^{-2 \alpha z}-e^{-z\alpha(z+r)}= \alpha \int_{r+z}^z e^{\alpha s}-e^{-\alpha s} \mathrm{d} s >0\end{equation*}
for any $r \in \mathbb{R}$ and $z \geq \max\{0, r\}$.
Finally, it is immediate that $\derivsec{\alpha}P(z, z; \alpha)=0$
and that $L(z, z; \alpha)=0$, where $L(z, r ;\alpha)\equiv l(z, r; \alpha)+l(-z, r; \alpha)+l( z, -r; \alpha)+l(-z, -r; \alpha)$. Jointly these facts imply that $\derivsec{\alpha}P(z, r; \alpha) \geq 0$ for all $(z, r) \in \mathbb{R}^2_{+}$. Suppose not. Then there must exist some $(z^*, r*) \in \mathbb{R}^2_{+}$ such that $\derivsec{\alpha}P(z^*, r*; \cdot) < 0$ and (by symmetry) $z^*>r*$. This is only possible if the set $\Omega\equiv \{ z \in [r*, z^*]: \deriv{z}\derivsec{\alpha}P(z, r; \alpha)<0\}$ is non-empty; denote its infimum by $z'$. For all $z \in [r^*, z']$ it must be the case that $\derivsec{\alpha}P(z, r^*; \alpha), \deriv{z}\derivsec{\alpha}P(z, r^*; \alpha) \geq 0$. This implies that $\deriv{z}\derivsec{\alpha}P(z', r^*; \alpha) > 0$; contradiction!

}

It follows immediately, that---under the conditions imposed by Proposition \ref{prop: var}---wage inequality in my model must be lower than in the benchmark as long as the benchmark wages are more unequal than the self-match wages, that is, as long as $\text{Var}(w_B) \geq \text{Var}(w_S)$.
 By standard results, this is always the case if production is either supermodular or additive, as then $w_B(\mathbf{x})=w_S(\mathbf{x})$.

If production is submodular, however, sorting is negative and assortative in the benchmark and, by revealed preference, workers' wages are higher than the self-match wage.  The sign of $\text{Var}(w_B) - \text{Var}(w_S)$ is then ambiguous and depends (only) on the production function and the distribution of workers' skill. This is easiest to see when skills are binary, in which case wages are not unique in the benchmark and any wage structure of the form
\[w_B(x_1)=0.5(F(\mathbf{x_1})+\alpha_{x_1} (2F(h, l)-F(\mathbf{h})-F(\mathbf{l})),\]
is sustainable in equilibrium. Here, $\alpha_{x_1} \in [0, 1]$ denotes the bargaining power of workers with skill $x_1$, and $\alpha_l+\alpha_h=1$.\footnote{Workers' bargaining power does not matter in this model as long as some workers self-match, because competition reduces the bargaining set to a singleton. If, however, production is submodular and skills are binary, then there is no self-matching in the benchmark and the bargaining set is not a singleton anymore.} Thus, in the binary skills case $\text{Var}(w_B)- \text{Var}(w_S)=0.25(2\alpha_l-1)(2F(h, l)-F(\mathbf{h})-F(\mathbf{l}))$, which is positive if and only if  $\alpha_l \geq 0.5$. In other words, the benchmark wage distribution is less unequal than the self-match one if and only if low-skill workers have a stronger bargaining position in the benchmark.\footnote{The sign of $\text{Var}(w_B) - \text{Var}(w_S)$ remains ambiguous also when benchmark wages are uniquely determined. To see this, consider the case of multiplicative production and log-normally distributed skills (i.e., Assumption \ref{item: mult} with $c<0$ and $\phi(x)=$). In that case, the benchmark and self-match wages become $w_B(x_1)=(Ac+\exp(2c\delta_1)-1)/(2c)+\exp(2c\delta_1)( \ln x_1-2\delta_1)$ and $w_S(x_1)=A+(x_1^{2c}-1)/c$
and thus
$\text{Var}(w_B) - \text{Var}(w_S)=\exp(4\delta_1 c)\left(\omega_{11}- \exp(2 \omega_{11})+\exp(\omega_{11}) \right).$
Clearly, this expression (a) equals to 0 and has a positive derivative for $\omega_{11}=0$ and (b) is concave in $\omega_{11}$. It follows, therefore, that there exists some $\bar{\omega}_{11}$ such that if $\bar{\omega}_{11}<\omega_{11}$ then $\text{Var}(w_B) > \text{Var}(w_S)$ and if $\bar{\omega}_{11}>\omega_{11}$ then $\text{Var}(w_B) < \text{Var}(w_S)$.}

Recall that Proposition \ref{prop: sorting} implies that if low-skill workers have sufficiently stronger relative concerns than high-skill workers then self-matching obtains in equilibrium. Thus, if the benchmark wage distribution is indeed less unequal than the self-match one, then the presence of relative concerns may well increase wage inequality in comparison to the benchmark.\looseness=-1

\begin{cor}\label{cor: var}
Consider an economy $(F, H)$ which satisfies Assumption \ref{ass: suff}.\newline
(i) If $\text{Var}(w_B) < \text{Var}(w_S)$, then  there exists a distribution of traits $H$, such that (a) economy $(F, \tilde{H})$ satisfies Assumption \ref{ass: suff}, (b) the marginal distribution of skill is the same under $H$ and $\tilde{H}$ ($H_{x_1}=\tilde{H}_{x_1}$), and (c) $\text{Var}(w_B)<\text{Var}(w^*)$.   \newline
(ii) Suppose that, in addition, $(F, H)$ satisfies the premise of Proposition \ref{prop: var}. If  $\text{Var}(w_B) \geq \text{Var}(w_S)$, which is always satisfied for supermodular $F$, then  $\text{Var}(w_B) \geq \text{Var}(w^*)$.
\end{cor}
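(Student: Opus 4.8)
The plan is to read off both parts from Propositions \ref{prop: sorting} and \ref{prop: var}, together with the fact recorded just before the statement that $w_B = w_S$ whenever $F$ is supermodular or additive. A preliminary observation is that both the benchmark wage $w_B$ and the self-match wage $w_S$ depend on an economy only through the production function and the marginal skill distribution; hence, along any family of economies $(F, \tilde H)$ with $\tilde H_{x_1} = H_{x_1}$, the quantities $\text{Var}(w_B)$ and $\text{Var}(w_S)$ coincide with their values for $(F, H)$.

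For part (i), I would apply Proposition \ref{prop: sorting} with $\rho = 1$ to obtain a distribution $\tilde H$ such that $\tilde H_{x_1} = H_{x_1}$, $(F, \tilde H)$ still satisfies Assumption \ref{ass: suff}, and $\text{Corr}(X_1, \mu^*_1(X_1, X_2)) = 1$ in every equilibrium of $(F, \tilde H)$. Since $X_1$ and $\mu^*_1(X_1, X_2)$ have the same non-degenerate marginal (full support rules out a point mass), a correlation of one forces $\mu^*_1(X_1, X_2) = X_1$ almost surely --- in the binary-skill case this is just the statement that two two-valued random variables with the same distribution and correlation one are equal almost surely. Thus every worker self-matches in skill, and since $\mu^*(\mu^*(\mathbf{x})) = \mathbf{x}$, matched workers have the same skill and split output evenly, giving $w^*(\mathbf{x}) = F(\mathbf{x_1})/2 = w_S(x_1)$ --- equivalently, \eqref{eq: wage2} with $\mathrm{d} F(x_1, \mu^*_1(x_1, \cdot)) \equiv 0$. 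Therefore $\text{Var}(w^*) = \text{Var}(w_S)$, and combining this with the hypothesis $\text{Var}(w_B) < \text{Var}(w_S)$ --- which, by the preliminary observation, holds verbatim for $(F, \tilde H)$ --- yields $\text{Var}(w_B) < \text{Var}(w^*)$, which is claim (c).

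For part (ii), Proposition \ref{prop: var} already gives $\text{Var}(w_S) \geq \text{Var}(w^*)$ under its premise, so chaining with the hypothesis $\text{Var}(w_B) \geq \text{Var}(w_S)$ yields $\text{Var}(w_B) \geq \text{Var}(w^*)$ at once. The parenthetical remark that the hypothesis is automatic for supermodular $F$ holds because in the benchmark ($x_2 \equiv 0$) a supermodular production function induces positive assortative matching in skill, which in a one-sided market means each worker matches a worker of equal skill, so $w_B = w_S$ and the hypothesis is met with equality.

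Essentially every step is bookkeeping; the one point requiring a little care is the passage, in part (i), from ``perfect correlation between $X_1$ and $\mu^*_1(X_1, X_2)$ with equal marginals'' to ``universal self-matching, hence $w^* = w_S$'', together with the check that replacing $H$ by $\tilde H$ disturbs neither $\text{Var}(w_B)$ nor $\text{Var}(w_S)$, so that the strict inequality carries over to the new economy. I do not foresee any genuine difficulty beyond this.
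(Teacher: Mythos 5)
Your proposal is correct and follows essentially the same route as the paper, which (in the remark following the corollary) derives (i) from Proposition \ref{prop: sorting} by pushing sorting to perfect positive assortativeness in skill---so that everyone self-matches and $w^*=w_S$---and derives (ii) by chaining Proposition \ref{prop: var} with the hypothesis $\text{Var}(w_B)\geq\text{Var}(w_S)$. Your extra care in passing from $\text{Corr}(X_1,\mu_1^*(X_1,X_2))=1$ with equal marginals to $\mu_1^*(X_1,X_2)=X_1$ a.s., and in noting that $\text{Var}(w_B)$ and $\text{Var}(w_S)$ are unchanged when $H$ is replaced by $\tilde H$, fills in steps the paper leaves implicit.
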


\begin{rem}
 Corollary \ref{cor: var}(i) follows from Proposition \ref{prop: sorting} and the properties of the self-match and benchmark wages, neither of which depend on the fact that all (or nearly all, under Assumption \ref{item: mult}) workers have a positive $x_2$.
 Corollary \ref{cor: var}(ii), in contrast, follows from Proposition \ref{prop: var} and thus requires this additional assumption.
\end{rem}
\begin{rem}
In the binary skill case, inequity aversion equivalent distributions of preferences (see \ref{par: inequity}) are exactly of the form that pushes sorting to be more positive and assortative in skills. In particular, it follows immediately from the discussions in \ref{par: inequity} and \ref{par: sortingbinary} that if $x_3 \geq 0.5-a_F/(1+a_F)$ for all workers, then everyone will self-match in the equilibrium of the model with inequity aversion. This implies that if $a_F<1$ and low-skill workers have higher bargaining power than high-skill workers ($\alpha \geq 0.5$)---and thus the benchmark distribution of wages is less unequal than the self-match one---then \emph{sufficiently strong inequity aversion increases wage inequality in the economy}.

This rather striking result seems to capture a mechanism that may hold much more broadly than just in the labor market: The desire to minimise within-group (here, within-firm) inequality, may push agents to sort with agents who are similar to them. While this indeed eliminates inequity within-groups it \emph{maximises} inequality between-groups---and if the structure of the economy is such that between-group inequality is a greater concern than within-group inequality, it may well increase overall inequality.

\end{rem}

\subsubsection{Welfare}\label{sec: util}
In order to meaningfully compare workers' welfare in this model with the benchmark case, we need to keep workers' preferences constant. One way of accomplishing this is to assume that the $x_2$ considered in the analysis so far is but the product of the `true' preference (denoted by $\tilde{x_2})$ and the intensity of interaction between co-workers $p \in [0, 1]$, with $x_2=p\tilde{x_2}$.
To fix ideas, suppose that $p=0$ corresponds to a case where the production process is fully remote and anonymous, so that the co-workers do not even know who they work with, and $p=1$ corresponds to a fully in-office, team-work based production. This setup allows us to fix the distribution of skill and `true' preference $(x_1, \tilde{x}_2)$ and ask whether an increase in the intensity of interactions from $p=0$ to $p>0$ is welfare-improving. \looseness=-1

\begin{prop}\label{prop: util}
    Suppose that Assumptions \ref{ass: common} and \ref{ass: copula} are satisfied.\newline
    (i) If $F$ is supermodular, then $\text{Pr}(u^*(X_1, X_2) \geq u_B(X_1))=1$. If, in addition, Assumption \ref{item: additive} or \ref{item: mult} is satisfied and $H$ has full support, then $\text{Pr}(u^*(X_1, X_2) > u_B(X_1))=1$.\newline
    (ii) If $F$ is strictly submodular and there exists some $\mathbf{x}^{\prime}$ such that $\mu_1(\mathbf{x}^{\prime})=x_1^{\prime}$, then $\text{Pr}(u^*(X_1, X_2)>u_B(X_1))<1$.
\end{prop}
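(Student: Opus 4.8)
The plan is to compare $u^*$ with the value a worker can secure by pairing with a clone, namely $F(x_1,x_1)/2$, and to identify this value with $u_B(x_1)$ under supermodularity.

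\emph{Part (i), weak inequality.} Setting $\mathbf{x}^j=\mathbf{x}^k=\mathbf{x}$ in the equilibrium condition \eqref{eq: max} gives $u^*(\mathbf{x})\geq\psi(\mathbf{x},\mathbf{x},u^*(\mathbf{x}))$; since the explicit $\psi$ evaluates to $\psi(\mathbf{x},\mathbf{x},u)=F(x_1,x_1)-u$, this yields $u^*(\mathbf{x})\geq F(x_1,x_1)/2$ (equivalently, stability in the TU representation gives $2\tilde u^*(\mathbf{x})\geq\Pi(\mathbf{x},\mathbf{x})$). When $F$ is supermodular the benchmark equilibrium is self-matching — by Theorem \ref{theo: dualprimal} the benchmark planner maximises $\int 2F(x_1,\mu_1(x_1))\,\mathrm{d}H_{x_1}$, which a supermodular $F$ solves by rank-matching $\mu_1(x_1)=x_1$ — so $u_B(x_1)=F(x_1,x_1)/2$ and $u^*(\mathbf{x})\geq u_B(x_1)$ everywhere.

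\emph{Part (i), strict inequality.} Full support together with Assumption \ref{item: additive} or \ref{item: mult} (for the latter, full support forces $\Omega$ to have full rank, and supermodularity means $c>0$) rules out perfect negative assortative matching between $x_1$ and $\bar v_2$, so the premise of Proposition \ref{prop: utilandwage} holds and \eqref{eq: eqmutility2} applies. I would show its integral term is strictly positive off the self-matching set: because $F$ is strictly increasing in its second argument and $v_1(x_1,\cdot)$—hence $\mu^*_1(x_1,\cdot)$—is strictly decreasing, $\mathrm{d}F(x_1,\mu^*_1(x_1,s))$ is a strictly negative measure, while $s\mapsto(s-x_2)(1+2s)^{-1}$ is strictly increasing and vanishes at $s=x_2$; checking both orderings of $x_2$ and $x_2^*(x_1)$ shows $u^*(\mathbf{x})-F(x_1,x_1)/2>0$ whenever $x_2\neq x_2^*(x_1)$, i.e. whenever $\mathbf{x}$ does not self-match. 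The self-matching set $\{\mathbf{x}:v_1(\mathbf{x})=H_{x_1}(x_1)\}$ is $H$-null under full support (under \ref{item: additive} it is the curve $1-H_{x_2}(x_2)=H_{x_1}(x_1)$; under \ref{item: mult} it is null since $\Omega$ has full rank), so $u^*>u_B$ almost surely.

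\emph{Part (ii).} When $F$ is strictly submodular the benchmark is negative assortative, so $u_B(x_1)=w_B(x_1)$, and stability gives $w_B(x_1)\geq F(x_1,x_1)/2$. I would sharpen this via the envelope relation $w_B'(x_1)=F_1(x_1,\mu_{1,B}(x_1))$ ($F_1$ the partial in the first argument, $\mu_{1,B}$ the benchmark match): at the median skill $\mu_{1,B}$ fixes $x_1$ and $w_B$ agrees with $F(x_1,x_1)/2$ in value and derivative, whereas above (below) the median strict submodularity makes $w_B'>\tfrac{\mathrm{d}}{\mathrm{d}x_1}[F(x_1,x_1)/2]$ ($<$); hence $w_B(x_1)>F(x_1,x_1)/2$ for every non-median $x_1$. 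A self-matcher splits output equally, so $u^*(\mathbf{x}')=F(x_1',x_1')/2\leq u_B(x_1')$. To obtain a positive-measure set, I split cases. Under \ref{item: binary} the existence of a self-matcher forces $\bar y>0$, so a positive measure of workers self-matches, each with $u^*(\mathbf{x})=F(x_1,x_1)/2\leq u_B(x_1)$ — already enough. In general, $\mu^*_1$ and $u^*$ are continuous, and since $v_1$ is continuous and strictly monotone in $x_2$ the self-matching locus through $\mathbf{x}'$ extends to a non-degenerate arc, which therefore meets non-median skill levels; there $u^*(\mathbf{x})=F(x_1,x_1)/2<u_B(x_1)$ strictly, and by continuity of $u^*$ and $u_B$ this strict inequality persists on an open neighbourhood, which has positive $H$-measure. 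Either way $\text{Pr}(u^*(X_1,X_2)>u_B(X_1))<1$.

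\emph{Main obstacle.} The weak inequality is immediate; the substance is in the strict statements. For (i) it is verifying that the integral in \eqref{eq: eqmutility2} is strictly—not merely weakly—positive off the self-matching set, and that this set is $H$-null. For (ii) the real difficulty is upgrading "the self-matcher is weakly worse off" to "a positive-measure set is worse off": when the self-matching locus is a measure-zero curve, as under Assumption \ref{item: mult}, one cannot argue on that curve directly and must combine continuity of $u^*$ and $u_B$ with the fact that the curve meets non-median skills, so that the inequality becomes strict on a genuine open set.
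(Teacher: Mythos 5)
Your proposal is correct and follows essentially the same route as the paper: in (i) the weak inequality comes from self-matching being the outside option and coinciding with the benchmark under supermodularity, the strict version comes from the integral formula of Proposition \ref{prop: utilandwage} being strictly positive off the $H$-null self-matching set, and in (ii) self-matchers receive strictly less than the negative-assortative benchmark payoff, with continuity delivering a positive-measure set. Your extra care in (ii) about the median skill level, where the benchmark NAM itself self-matches and the inequality $u_B(x_1)>F(x_1,x_1)/2$ degenerates to equality, is a genuine refinement the paper's own proof glosses over.
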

\deferred[proof: proputil]{\paragraph*{Proof of Proposition \ref{prop: util}}
The first part of (i) is immediate, because a worker of an arbitrary type $\mathbf{x}$ can guarantee themselves the benchmark payoff of $0.5F(x_1, x_1)=u_B(x_1)$ by self-matching. The second part follows, because if either Assumption \ref{item: additive} or \ref{item: mult} is satisfied and $H$ has full support, then---by \eqref{eq: sortingadd} and \eqref{eq: sortingmult}---only a measure zero of workers self-matches and conditional on $x_1$ utility is minimised for the self-matching workers by Proposition \ref{prop: utilandwage} and footnote \ref{foot: generic}.\looseness=-1

If surplus is strictly submodular, then the above logic breaks down, because $u^*(x_1, x_2^*(x_1))=0.5F(x_1, x_1)<u_B(x_1)$; that is, while self-matching is still any workers' outside option, this outside option is strictly worse than their utility under the benchmark. Clearly then, $u^*(\mathbf{x}^{\prime})<u_B(x_1')$, and the result follows from the absolute continuity of $\text{Pr}(x_2|X_1=x_1)$ and continuity of $u^*(\bullet)$. \looseness=-1
}
The welfare impact of social comparisons hinges on the properties of the production function. Under supermodular production, self matching always  guarantees the benchmark payoff, regardless of how intense social interactions are---by revealed preference workers must therefore benefit from social comparisons. Under submodular production, the negative and assortative sorting pattern that prevails in the benchmark allows workers to earn higher payoffs than the self-match payoff: If, therefore, under $p=1$ some workers decide to self-match in order to avoid negative social comparisons, then they must be worse off than they would be under $p=0$. This indicates that if each team could choose $p$, then all teams would choose $p=1$ under supermodular production, but some would chose $p=0$ under submodular production. This observation forms the basis of the `theory of the firm' outlined below. \looseness=-1

\section{A Unified Theory of Changes in Sorting, Outsourcing and Wage Inequality}\label{sec: ToF}

In this Section, I show that the interaction between skill-biased technological change and relative concerns helps to explain a number of seemingly disconnected but well-documented empirical trends. Namely, and starting with the least obvious (a)  the marked increase in domestic outsourcing \citep{Goldschmidt2017, Bergeaud2024}, (b) the increase in workers' sorting, that is,  the probability that high-skill workers' co-workers are high-skill \citep{Freund2022}, (c) the disproportionately large increases in between- compared to within-firm inequality \citep{Song2018, Tomaskovic2020} and (d) the strong increase in overall wage inequality \citep{Bound1992, Katz1992, Juhn1993}.

Let me start by defining \emph{skill-biased technological change} (SBTC).

\begin{defi}
A technological change is a change in the production function from some $F(\cdot, \cdot; \theta_1)$ to some $F(\cdot, \cdot; \theta_2)$. A technological change is \emph{skill-biased} if there exists an increasing function $S: I_{x_1} \to \mathbb{R}$ such that $F(x_1^k, x_1^j; \theta_2)=F(x_1^k, x_1^j; \theta_1)+S(x_1^k)+ S(x_1^j)$.
\end{defi}
This definition of SBTC, while well-grounded in the literature \citep{Lindenlaub2017}, is not entirely general, in that it restricts attention to technological change that is own-skill biased. In other words, the additional output produced by worker $\mathbf{x}$ does not at all depend the co-worker's skill, only on the workers individual ability. The advantage of this definition, however, is that in the benchmark SBTC as defined here leaves sorting unaffected and does not trigger any trickle-down effect; this will help me highlight the impact that the interaction between SBTC and relative concerns has on sorting and inequality.

\subsection{A Theory of the Firm}
While the model as presented already contains the ingredients needed to study the impact of SBTC on sorting, as well as on overall- and between-firm inequality, in order to address its impact on outsourcing I need to allow the teams to choose where to draw the boundary of the firm. The basic premise is very simple: Wage comparisons weigh lighter in agents' utility when they happen across firm boundaries. In other words, the co-worker's high wage bothers the worker less if the worker is a subcontractor rather than a subordinate.

Suppose that each matched team has the option of \emph{outsourcing}, that is, forming two separate firms instead of one.
As is standard in the theory of the firm literature, outsourcing comes at a cost $c \geq 0$: Contracts need to be written, there is additional accounting, etc. The possible advantage of outsourcing, however, is that the two co-workers stop comparing their wages, so that each workers utility is equal to their own wage only, with $u(w^k,w^j)=w^k$. If the team decides not to outsource, then each co-worker's utility is as in the baseline model.
As this extended model is significantly less tractable than the baseline (e.g., it has no TU-representation), I restrict attention to the binary-skill case---which remains simple enough to solve---throughout. \looseness=-1

\begin{prop}\label{prop: outsourcing}
Suppose that Assumption \ref{item: binary} is satisfied, and denote by $s_F$ the loss of surplus resulting from self-matching, with
$s_F\equiv F(h, l)-0.5(F(\mathbf{h})+F(\mathbf{l}))$.\newline
(i) If $c>s_F$, then there is no outsourcing and the equilibrium is as described in \ref{par: sortingbinary}.\newline
(ii) If $c \in [0, s_F]$, then all teams formed in any equilibrium are between a high- and a low-skill worker. Worker $(\mathbf{x})$ who forms a (non-)outsourcing team receives  $u^o(\mathbf{x})$ ($u^n(\mathbf{x})$) and earns a wage $w^o(\mathbf{x})$ ($w^n(\mathbf{x})$), where:
\begin{IEEEeqnarray}{rCl}
    u^o(\mathbf{x})&=&w^o(\mathbf{x})=\frac{F(\mathbf{x_1})}{2}+\alpha_{x_1}(s_F-c), \,
    w^n(\mathbf{x})=\frac{F(h, l)}{2}+\frac{2w^o(\mathbf{x})-F(h, l)}{4T_{x_1}(y^o)}, \label{eq: outsourcingwages} \\
    u^n(\mathbf{x})&=&\frac{F(h, l)}{2}+(2w^n(\mathbf{x})-F(h, l))(1+2x_2).\label{eq: payoffs}
\end{IEEEeqnarray}
Here, $\alpha_{x_1} \in [0, 1]$ denotes the bargaining power of workers with skill $x_1$ (with $\alpha_l+\alpha_h=1$) and  $y^o$ uniquely solves
\begin{equation}\label{eq: ycirc}
\min\{\frac{T_h(1)}{T_l(1)}, \max\{\frac{T_h(0)}{T_l(0)}, 1-\frac{2c}{F(h,l)-F(\mathbf{l})+2\alpha_l(c-s_F)}\}\}=\frac{T_h(y^o)}{T_l(y^o)}.
\end{equation}
Low-skill workers are outsourced  iff their $x_2 > T_l(y^o)-0.5$. Outsourced low-skill workers match with high-skill workers of $x_2<T_h(y^o)-0.5$, the non-outsourced low-skill workers match with high-skill workers of $x_2\geq T_h(y^o)-0.5$.

\end{prop}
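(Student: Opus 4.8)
The plan is to treat parts (i) and (ii) separately, using throughout two elementary facts. First, in any equilibrium of the extended model $u^*(\mathbf{x})\ge\tfrac12 F(\mathbf{x_1})$: a clone self-match with even splitting delivers exactly this on the non-outsourced margin (cf. \eqref{eq: max} and the self-match frontier $\psi(\mathbf{x},\mathbf{x},u)=F(\mathbf{x_1})-u$), while outsourcing any self-match merely wastes $c$. Second, in any equilibrium the two partners in a \emph{same-skill} match split output evenly --- individual rationality forces $w\ge\tfrac12 F(\mathbf{x_1})$ on each of them and the two wages sum to output --- so each gets exactly $\tfrac12 F(\mathbf{x_1})$. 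For (i): if a high--low pair outsourced, its members' payoffs would sum to $F(h,l)-c$, so $F(h,l)-c\ge\tfrac12(F(\mathbf{h})+F(\mathbf{l}))=F(h,l)-s_F$, i.e.\ $c\le s_F$, contradicting $c>s_F$; hence no team outsources. And in the allocation of \ref{par: sortingbinary} every matched pair's payoffs already sum to at least $\tfrac12(F(\mathbf{h})+F(\mathbf{l}))>F(h,l)-c$, so no outsourcing deviation is profitable there either, and that allocation remains the equilibrium.

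For (ii), assume $0\le c\le s_F$ (the substantive case being $s_F>0$). \emph{All teams are high--low:} any worker in a same-skill match gets exactly $\tfrac12 F(\mathbf{x_1})$, and since such matches come in $hh$/$ll$ couples (equal masses of the two skills), one such high and one such low worker could jointly form an outsourced high--low firm splitting $F(h,l)-c\ge\tfrac12(F(\mathbf{h})+F(\mathbf{l}))$, a strict improvement when $c<s_F$; the boundary $c=s_F$ is killed analogously by a nearly-even-split \emph{non}-outsourced high--low deviation, which Pareto-dominates two self-matches whenever $s_F>0$. \emph{Reduction to a partition problem:} an outsourced high--low firm yields total untransformed utility $F(h,l)-c$ regardless of either $x_2$, whereas a non-outsourced one has frontier $(1+2x_2^l)u^h+(1+2x_2^h)u^l=F(h,l)(1+x_2^h+x_2^l)$, whose TU representation has the additively separable surplus $F(h,l)\big((1+2x_2^h)^{-1}+(1+2x_2^l)^{-1}\big)$; hence within either submarket the matching is payoff-irrelevant, and only the partition of the (equal) high and low masses between the two submarkets matters, with the outsourced side holding equal masses $\tfrac12 y^o$ of each. \emph{The partition is a double threshold:} comparing the two frontiers pairwise, a high--low pair strictly prefers outsourcing exactly when $(x_2^l-x_2^h)(2u^h-F(h,l))>c(1+2x_2^h)$, and since the high-skill worker earns the larger wage ($2w^h-F(h,l)>0$, hence $2u^h-F(h,l)>0$), this condition is increasing in $x_2^l$ and decreasing in $x_2^h$; the outsourced side is therefore the top-$x_2$ low-skill workers and the bottom-$x_2$ high-skill workers, i.e.\ the cutoffs $x_2>T_l(y^o)-\tfrac12$ and $x_2<T_h(y^o)-\tfrac12$.

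\emph{Prices.} On the outsourced side all pairs are interchangeable, so the common surplus over the self-match outside options, $F(h,l)-c-\tfrac12(F(\mathbf{h})+F(\mathbf{l}))=s_F-c$, is divided by bargaining, which gives $w^o$ (and utility equals wage). On the non-outsourced side additive separability makes each skill's wage type-independent, $w^n(h,\cdot)=\tfrac12(F(h,l)+\kappa)$ and $w^n(l,\cdot)=\tfrac12(F(h,l)-\kappa)$ for a single relative price $\kappa$, with $u^n$ read off the frontier. Requiring the \emph{marginal} worker of each skill (type with $1+2x_2=2T_{x_1}(y^o)$) to be indifferent between the two submarkets --- $w^o(\mathbf{x})$ equals that worker's non-outsourced utility --- yields one expression for $\kappa$ from each skill, and these are precisely the displayed $w^n$. \emph{The cutoff equation.} Consistency of the two expressions for $\kappa$ collapses to $T_h(y^o)/T_l(y^o)=1-2c/D$ with $D=F(h,l)-F(\mathbf{l})+2\alpha_l(c-s_F)$; the outer $\min/\max$ in \eqref{eq: ycirc} merely records the corner regimes in which all, resp.\ no, high--low pairs outsource, and since $y\mapsto T_h(y)/T_l(y)$ is strictly increasing with range $[T_h(0)/T_l(0),T_h(1)/T_l(1)]$, $y^o$ exists and is unique. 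What remains is to verify the candidate is an equilibrium: feasibility is the mass-balance above; individual rationality is read off the formulas using $s_F-c\ge0$ and the cutoff inequalities; and absence of blocking pairs reduces, via the additive-separability structure, to the pairwise frontier comparison and the marginal-indifference conditions.

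I expect the main obstacle to be that last verification, carried out \emph{without} a global transferable-utility representation: for every conceivable high--low pair (and every same-skill pair) one must compare the payoffs the candidate assigns against the upper envelope of that pair's outsourced and non-outsourced Pareto frontiers, and confirm in particular that no \emph{non-marginal} worker gains by switching submarket or skill-partner. The type-independence of within-submarket wages and the single-crossing of the two frontiers established in the threshold step are what make this tractable, but it is the delicate part of the argument, along with the careful bookkeeping of the corner cases $y^o\in\{0,1\}$.
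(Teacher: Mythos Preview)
Your proposal is correct and follows essentially the same route as the paper's proof: (i) outsourcing is dominated by self-matching when $c>s_F$; (ii) self-matching is dominated by outsourcing when $c<s_F$, hence all teams are high--low; wages within each submarket are $x_2$-independent; payoffs in the non-outsourcing submarket are monotone in $x_2$ while outsourced payoffs are flat, yielding the threshold structure; marginal indifference pins down $w^n$ in terms of $w^o$; and adding the two skill-types' indifference conditions (using $w^n_h+w^n_l=F(h,l)$ and $w^o_h+w^o_l=F(h,l)-c$) collapses to \eqref{eq: ycirc}.

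The one genuine organisational difference is how the threshold is established. The paper first argues that $w^n(\mathbf{x})$ and $w^o(\mathbf{x})$ must be constant in $x_2$ by a competitive-equilibrium no-arbitrage argument (``otherwise no one would match with the worker earning $\max_{x_2} w^i(x_1,x_2)$''), and then reads off from \eqref{eq: utildefinition} that $u^n$ is monotone in $x_2$ while $u^o$ is constant. You instead compare the outsourcing and non-outsourcing Pareto frontiers pairwise and exploit single-crossing in $(x_2^l,x_2^h)$. Your route is a little more work but is also more self-contained: it delivers the threshold without needing the intermediate wage-constancy step, and it makes the equilibrium verification (which the paper leaves implicit) more transparent. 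Both arguments are equally valid and short.
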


\deferred[proof: propoutsourcing]{\paragraph*{Proof of Proposition \ref{prop: outsourcing}}
(i) In this case, outsourcing is dominated by self-matching: In choosing between these options relative concerns do not matter, and  $c> s_F$ implies that the output from self-matching is higher than from outsourcing. %

(ii) By the same logic as in (i), $c<s_F$ implies that outsourcing dominates self-matching for all teams, and thus all teams consist of one high- and one low-skill worker. In equilibrium, $w^n(\mathbf{x}), w^o(\mathbf{x})$ are constant in $x_2$; otherwise, no-one would match with the workers earning $\max_{x_2} w^i(x_1, x_2)$ for $i \in \{n, o\}$ and $x_1 \in \{h, l\}$. Any high-skill worker can guarantee themselves the self-match wage of $F(\mathbf{h})/2$, hence $\Delta w^n\equiv w^n(h, x_2)-w^n(l, x_2) = 2w^n(h, x_2)-F(h, l) \geq F(\mathbf{h})-F(h, l)>0$. Accordingly, $u^n(h, x_2)$ ($u^n(l, x_2)$) is increasing (decreasing) in $x_2$, whereas $u^o(\mathbf{x})$ is constant in  $x_2$. By feasibility, the measure of (non-)outsourcing high-skill workers must be equal to the measure of (non-)outsourcing low-skill workers; thus, there exists some $y^o$ such that a high- (low-)skill worker outsources (gets outsourced) iff $x_2 < T_h(y^o)-0.5$ ($x_2 > T_l(y^o)-0.5$). Of course, the workers $(x_1, T_{x_1}(y^o)-0.5)$ are indifferent between outsourcing and not, so that
\begin{equation}\label{eq: nonoutwage}
w_{x_1}^n=\frac{w_{x_1}^o+(T_{x_1}(y^o)-0.5)F(h, l)}{2T_{x_1}(y^o)}.
\end{equation}
Finally, by the same logic as in Section \ref{sec: inequality}, the wages and payoffs under outsourcing depend on the bargaining power of each type of agents,
which---together with \eqref{eq: nonoutwage}---yields \eqref{eq: outsourcingwages}. \eqref{eq: payoffs} follows from substituting \eqref{eq: outsourcingwages} into \eqref{eq: utildefinition}. \eqref{eq: ycirc} follows then from adding up \eqref{eq: nonoutwage} for the two skill types, using $w_h^n+w_l^n=F(h,l)$ and $w_h^o+w_l^o=F(h,l)-c$, and substituting for $w_{x_1}^o$ from \eqref{eq: outsourcingwages}.

}
In the baseline model, welfare-reducing social comparisons can be avoided only through self-matching. Outsourcing provides an alternative way of opting-out from these comparisons. For that alternative to be used, it must be cheaper than self-matching. As self-matching is output-maximising under supermodular production, outsourcing can happen only if production is submodular ($s_F>0$), and the cost of outsourcing is low compared to the loss of production stemming from self-matching ($c< s_F$).

To understand which teams outsource, consider the \emph{marginal team}, that is a team which is indifferent between forming one or two firms. If outsourcing comes at cost, their indifference implies that social comparisons are costly within that team, and thus the high-skill worker has weaker relative concerns than the low-skill worker.
In the extreme case of cost-less outsourcing, the high- and low-skill workers forming the marginal team have equally strong relative concerns. Of course, the high-skill workers forming (non-)outsourcing teams have weaker (stronger) relative concerns than the high-skill worker in the marginal team; and \emph{vice versa} for the low-skill workers.

 Wages and payoffs are only determined up to a constant: Any split of the additional output $s_F-c$ resulting from outsourcing can be supported in equilibrium. Outside of that, the wages and payoffs have similar features as in the baseline model.  High-skill workers earn lower wages in non-outsourcing than in an outsourcing team, whereas low-skill workers earn higher wages. Nevertheless, the utility of the workers employed by a non-outsourcing team is higher than the utility of workers with the same skill employed by an outsourcing team.\looseness=-1

\subsection{Consequences of Skill-Biased Technological Change}

In this section, I explain how the interaction between relative concerns and SBTC can explain the recent evolution of outsourcing, sorting and  inequality. I focus on the case in which outsourcing and non-outsourcing teams co-exist; that is, I assume $c<s_F$ and $T_l, T_h$ such that $y^o \in (0, 1)$. \looseness=-1

\subsubsection{Outsourcing and Sorting}\label{sec: outsourcing}

As SBTC leaves $s_F$ unchanged, it follows directly from  \eqref{eq: ycirc} in Proposition \ref{prop: outsourcing} that SBTC increases $y^o$ and thus also the number of outsourcing teams. In other words, as long as any jobs were outsourced initially, skill-biased technological change will cause more outsourcing.
To understand the intuition, recall that in the marginal team the high-skill has stronger relative concerns than the low-skill worker. SBTC further increases the inequality within that team---and with that the welfare loss from social comparisons. As a result, the marginal team now strictly prefers to outsource, and the number of outsourcing teams increases. Furthermore, the wages of the newly outsourced low-skill workers fall in comparison to non-outsourced low-skill workers, which is consistent with the empirical findings from \cite{Goldschmidt2017} and \cite{Bergeaud2024}.

If $c<s_F$ then all production teams consist of one high- and one low-skill worker, and thus SBTC has no effect on how workers sort into \emph{production teams}. Crucially, however, SBTC does affect how workers sort into \emph{firms}, because every outsourcing team consists of two single-worker firms!
 Thus, for an econometrician who observes the composition of firms but not teams, workers from outsourcing teams are sorted positively and assortatively, whereas workers in non-outsourcing firms are negatively sorted. It follows, therefore, that increase in outsourcing caused by SBTC results in workers sorting more positively into firms.\looseness=-1

\subsubsection{Wage Inequality}\label{sec: outsourcinginequality}
 Denote within- (between-) firm inequality by WFWI (BFWI), and the difference between high- and low-skill wages in (non)outsourcing teams by $\Delta w^o$ ($\Delta w^n$). Within-firm inequality features only in non-outsourcing firms, which form with probability $1-y^o$, so that\looseness=-1
\begin{equation} \label{eq: wfwi}
    \text{WFWI}=0.25(1-y^o)(\Delta w^n)^2.
\end{equation}

Between-firm wage inequality itself consists of two components: the difference in the output produced by outsourcing and non-outsourcing teams, and the difference in wages of high- and low-skill workers in outsourcing teams, with
\begin{equation}\label{eq: bfwi}
   \text{BFWI}=0.25\left[y^o(1-y^o)c^2+y^o\left(\Delta w^o \right)^2.\right]
\end{equation}
It follows thus from s \eqref{eq: wfwi}-\eqref{eq: bfwi}, the law of total variance and some algebra that
\begin{IEEEeqnarray}{rCl}
 \text{Var}(W)&=&0.25\left[(1-y^o)(\Delta w^n)^2 +y^o(1-y^o)c^2+y^o\left(\Delta w^o\right)^2\right] \label{eq: var}\\
 \text{BFWI}/\text{WFWI}&=& y^o\left(T_l(y^o)-T_h(y^o)\right)^2+(\frac{y^o}{1-y^o})\left(T_h(y^o)+T_l(y^o) \right)^2. \label{eq: ratio}
\end{IEEEeqnarray}

\begin{prop}\label{prop: inequality}
Suppose that Assumption \ref{item: binary} is satisfied and that $c< s_F$. \newline
(i) If $\underline{x}_2 \geq 0$, then skill-biased technological change increases the variance of wages.\newline
(ii) If $\deriv{y}\ln (G_l^{-1}(y)) \leq 4\sqrt{3}/9$ for all $y \in [0, 1]$, then SBTC increases $\text{BFWI}/\text{Var}(W)$. \newline
\end{prop}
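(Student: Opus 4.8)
The plan is to reduce both claims to monotone comparative statics in the single equilibrium object $y^o$. Write the SBTC as $F\mapsto F+S(x_1^k)+S(x_1^j)$ and set $\Delta S\equiv S(h)-S(l)>0$. Three facts drive everything. (1) The quantities $s_F$ (already noted to be invariant), $c$, and the conditional distributions $G_h,G_l$ (hence the maps $T_h,T_l$) are unchanged by SBTC, whereas $\Delta w^o=\tfrac12\bigl(F(\mathbf{h})-F(\mathbf{l})\bigr)+(\alpha_h-\alpha_l)(s_F-c)$ rises by exactly $\Delta S$; so the entire comparative static runs through $\Delta w^o$ and $y^o$. (2) Using the definition of $s_F$ one verifies the identity $F(h,l)-F(\mathbf{l})+2\alpha_l(c-s_F)=\Delta w^o+c$, so the interior branch of \eqref{eq: ycirc} reads $T_h(y^o)/T_l(y^o)=(\Delta w^o-c)/(\Delta w^o+c)$; since the right-hand side is strictly increasing in $\Delta w^o$ and $T_h(\cdot)/T_l(\cdot)$ is strictly increasing in $y$ (because $T_h$ increases and $T_l$ decreases), SBTC strictly increases $y^o$, and in the coexistence equilibrium $0<T_h(y^o)<T_l(y^o)$. (3) Combining the indifference conditions of the two marginal workers in Proposition \ref{prop: outsourcing} with this relation yields the clean formula $\Delta w^n=\Delta w^o/\bigl(T_h(y^o)+T_l(y^o)\bigr)$; substituting into \eqref{eq: var} gives $\text{Var}(W)=\tfrac{c^2}{4}y^o(1-y^o)+\tfrac{(\Delta w^o)^2}{4}\bigl(y^o+\tfrac{1-y^o}{(T_h(y^o)+T_l(y^o))^2}\bigr)$, and substituting into \eqref{eq: ratio} shows $\text{BFWI}/\text{WFWI}$ depends on primitives only through $y^o$.

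\textbf{Part (i).} I argue from that closed form. When $\underline{x}_2\ge 0$ we have $T_h(y^o)+T_l(y^o)\ge 1$, so the bracket multiplying $(\Delta w^o)^2/4$ lies in $[y^o,1]$. The direct effect of SBTC ($\Delta w^o$ strictly up, $c$ fixed) strictly raises $\text{Var}(W)$ at fixed $y^o$; it then remains to control the indirect effect through $y^o$. I would differentiate $\text{Var}(W)$ along the path in $\Delta S$, splitting the derivative into the strictly positive $\partial/\partial\Delta w^o$ piece and a piece proportional to $dy^o/d\Delta S>0$ whose bracket is $\tfrac{c^2}{4}(1-2y^o)+\tfrac{(\Delta w^o)^2}{4}\bigl(1-(T_h+T_l)^{-2}\bigr)-\tfrac{(\Delta w^o)^2(1-y^o)}{2(T_h+T_l)^{3}}\tfrac{d}{dy^o}\bigl(T_h+T_l\bigr)$, with $dy^o/d\Delta S$ pinned down by differentiating fact (2), i.e.\ $(T_h/T_l)'(y^o)\,dy^o=2c(\Delta w^o+c)^{-2}\,d\Delta w^o$. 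Since $y^o(1-y^o)$ and $T_h(y^o)+T_l(y^o)$ are not monotone, this bracket is not termwise signed; one must combine the terms and lean on $\underline{x}_2\ge 0$. A possibly cleaner route is the per-skill variance decomposition (valid because $p=\tfrac12$): $\text{Var}(W)=\tfrac14\bigl(E[W\mid h]-E[W\mid l]\bigr)^2+\tfrac12 y^o(1-y^o)(\Delta w^n)^2\bigl[(T_h-\tfrac12)^2+(T_l-\tfrac12)^2\bigr]$, where each summand is a product of factors I can bound below by SBTC-increasing quantities when $\underline{x}_2\ge 0$.

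\textbf{Part (ii).} Note first that $\text{BFWI}/\text{Var}(W)=\bigl(\text{BFWI}/\text{WFWI}\bigr)/\bigl(1+\text{BFWI}/\text{WFWI}\bigr)$ is a strictly increasing transform of $\text{BFWI}/\text{WFWI}$, so by \eqref{eq: ratio} and fact (2) it suffices to show $R(y)\equiv y\bigl(T_l(y)-T_h(y)\bigr)^2+\tfrac{y}{1-y}\bigl(T_h(y)+T_l(y)\bigr)^2$ is increasing in $y$. Differentiating, $R'(y)$ is the sum of: the explicit-$y$ part $(T_l-T_h)^2+(T_h+T_l)^2/(1-y)^2>0$; a term $T_h'(y)\ge 0$ times the nonnegative coefficient $\tfrac{2y}{1-y}\bigl(2T_h+y(T_l-T_h)\bigr)$ (here $0<T_h<T_l$); and a term $T_l'(y)=-(G_l^{-1})'(1-y)<0$ times $\tfrac{2y}{1-y}\bigl((2-y)T_l+yT_h\bigr)>0$ — the only destabilising channel, which is exactly why the hypothesis constrains $G_l^{-1}$ alone ($T_h$ only ever helps, being increasing with $\partial R/\partial T_h>0$). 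Bounding it by the hypothesis, $|T_l'(y)|=(G_l^{-1})'(1-y)\le\tfrac{4\sqrt3}{9}G_l^{-1}(1-y)=\tfrac{4\sqrt3}{9}\bigl(T_l(y)-\tfrac12\bigr)$, reduces $R'(y)\ge 0$ to a polynomial inequality in $1-y$ and the ratio $T_h/T_l$; maximising the offending cubic in $1-y$ — stationary at $1-y=1/\sqrt3$, where $(1-y)\bigl(1-(1-y)^2\bigr)=2\sqrt3/9$ — is precisely what makes the constant $4\sqrt3/9=\sqrt{16/27}$ tight.

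\textbf{Main obstacle.} I expect the hard part to be the indirect-effect bound in part (i). In (ii) a single channel is responsible and the hypothesis is tailored to neutralise it; in (i), by contrast, the sign of the $y^o$-channel genuinely flips with the location of $y^o$ (through the non-monotone $y^o(1-y^o)$ and $T_h+T_l$), so one must show the $(\Delta w^o)^2$ gain dominates uniformly over all admissible configurations — the point at which I would most likely abandon brute-force differentiation in favour of the per-skill decomposition.
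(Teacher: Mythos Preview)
Your plan for part (ii) is essentially the paper's: differentiate the ratio in $y^o$, observe that the $T_h'$ channel only helps, isolate the single destabilising $T_l'$ channel, and kill it with the hypothesis on $\tfrac{d}{dy}\ln G_l^{-1}$. The paper differentiates $\text{BFWI}/\text{Var}(W)$ directly rather than passing through the monotone transform to $\text{BFWI}/\text{WFWI}$, but the argument and the role of the constant $4\sqrt{3}/9$ are the same.

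Part (i) is where your parameterisation creates the very obstacle you flag. By substituting $\Delta w^n=\Delta w^o/(T_h+T_l)$ \emph{before} differentiating, you import the derivative $(T_h+T_l)'(y^o)$ into the $y^o$-bracket, and that term has no sign. The paper avoids this by keeping the three variables $(\Delta w^n,\Delta w^o,y^o)$ separate in \eqref{eq: var} and applying the chain rule. The coefficient of $dy^o/d\theta$ is then simply
\[
(\Delta w^o)^2-(\Delta w^n)^2+(1-2y^o)c^2,
\]
and plugging in \emph{your own} identities $\Delta w^o=(T_h+T_l)\Delta w^n$ and $c=(T_l-T_h)\Delta w^n$ this equals
\[
(\Delta w^n)^2\Bigl[(T_h+T_l)^2-1+(1-2y^o)(T_l-T_h)^2\Bigr]\ \ge\ (\Delta w^n)^2\bigl(4T_hT_l-1\bigr)\ \ge\ 0
\]
once $\underline{x}_2\ge 0$ forces $T_h,T_l\ge\tfrac12$. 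No $T_h',T_l'$ appear anywhere. The price of the three-variable route is that one must also verify $d\Delta w^n/d\theta>0$ separately; but from your fact (2) one has $\Delta w^n=(\Delta w^o+c)/\bigl(2T_l(y^o)\bigr)$, whose numerator rises by $\Delta S$ while $T_l(y^o)$ falls (since $y^o$ rises and $T_l$ is decreasing). So the ``main obstacle'' you anticipate is an artefact of collapsing to two variables; there is no need to retreat to the per-skill decomposition.
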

\deferred[proof: propinequality]{\paragraph*{Proof of Proposition \ref{prop: inequality}}
In order to take derivatives, define $F(\cdot; \theta)\equiv \theta F(\cdot; \theta_2)+(1-\theta)F(\cdot; \theta_2)$. We can then write
\[
   4 \tderiv{\theta} \text{Var}(W)= y^o\deriv{\theta} \Delta (w^n)^2 +(1-y^o)\deriv{\theta}\left(\Delta w^o \right)^2
   - \underbrace{\deriv{\theta}y^o}_{> 0} \underbrace{\left[(\Delta w^n)^2+(1-2 y^o)c^2-1 \right]}_{\text{Sorting Effect}}.
\]

As $\Delta w^o=0.5(F(\mathbf{h})-F(\mathbf{l})-(1-2 \alpha)(s_F+2c)$ it follows from the definition of SBTC that $\deriv{\theta}\left(\Delta w^o \right)^2>0$. Similarly, it is easy to show that we can write $\Delta w^n=0.5(F(h,l)-F(\mathbf{l})+\alpha(c+s_F/2))/T_l(y^o)$ from which we can also immediately see that $\deriv{\theta}\left(\Delta w^n \right)^2>0$. What is left is to show that the sorting effect is positive as well. That this is the case as long as $\underline{x_2}\geq 0$ can be easily verified by noticing that  $\Delta w^n=(T_l(y^o)+T_h(y^o))^{-1}\Delta w^o$ and $c=\Delta w^n\left(T_l(y^o)-T_h(y^o) \right)$ and using elementary algebra.\looseness=-1

(ii)
Denote $(T_l(y^o)-T_h(y^o))/(T_h(y^o)+T_l(y^o))$ by $B$ (note that $B \leq 1$) and define $P\equiv(\deriv{y}BFWI/\text{Var}(W))(T_h(y^o)+T_l(y^o))^2$. Then:
\begin{IEEEeqnarray}{rCl}
  P& \geq &  \frac{2By^o(\deriv{y}T_l(y^o)-\deriv{y}T_h(y^o))}{T_h(y^o)+T_l(y^o)}-\frac{1}{(1-y^o)^2}+\frac{2(\frac{y^o}{1-y^o})(\deriv{y}T_l(y^o)+\deriv{y}T_h(y^o))}{T_h(y^o)+T_l(y^o)} \nonumber\\
  &\geq & 4((2-y^o)y^o/y)\left(\deriv{y}T_l(y^o)/(0.5+\underline{x_2}+T_l(y^o))+1/(2(1-y^o)(1-(1-y^o)^2))\right ).\label{eq: claim1}
\end{IEEEeqnarray}
 (ii) follows because $1/2(1-y^o)(1-(1-y^o)^2))>4 \sqrt{3}/9$ and $\deriv{y}\ln\left(0.5+\underline{x_2}+T_l(y^o)\right) \leq -\deriv{y} \ln( G_l^{-1}(y^o))$.

}

 By inspection of \eqref{eq: var}, SBTC has both a direct effect (through changes in $\Delta w^n$ and $\Delta w^o$) and an indirect, sorting effect (through changes in $y^o$) on wage inequality. The direct effect must increase wage inequality. The sign of the sorting effect is ambiguous in general, but must be positive as long as outsourcing is viable and all workers have positive relative concerns; thus, SBTC increases wage inequality.\footnote{Interestingly, if production was supermodular, and thus outsourcing was not viable, both the sorting and the overall effects could be negative.}

By  \eqref{eq: ratio}, SBTC affects the ratio of between- to within-firm wage inequality only through sorting $y^o$. In general, the link between sorting and said ratio is  ambiguous. On the one hand, an increase in sorting corresponds to more outsourcing, and outsourcing mechanically transforms within- into between-firm inequality. On the other hand, stronger sorting means that the marginal team hires a low-skill worker of weaker relative concerns than before, which increases between-firm inequality and decreases within-firm inequality. Proposition \ref{prop: inequality}(ii) states that if low-skill workers' relative concerns are not too differentiated, the positive effect is much stronger and must dominate.
This mirrors the results from \cite{Gola2023}: In the absence of compensating differentials, SBTC increases the ratio of between-firm and overall inequality; when compensating differentials are present, then SBTC's impact is ambiguous.\looseness=-1

\subsection{Discussion}

\paragraph*{Social Comparisons Across Firm Boundaries} I assumed that social comparisons within a production team are much weaker if that team is split into two firms. \cite{Nickerson2008} attribute this weakening of social comparisons across firm boundaries to the salience of within-firm comparisons, and to within-firm competition for the same resources. They also provide a number of persuasive case studies in which the firm boundary mattered critically for the strength of social comparisons. Another justification of this assumption can be derived from \cite{Coase1937}, who hypothesised that some people like to direct others, and some like to be directed, and differentiated between `employees' and `subcontractors' precisely by the degree to which their work is directed. Under this interpretation, co-workers in outsourcing teams work together, but---in contrast to a non-outsourcing firm---none of them is directed by the other, and thus social comparisons matter less.

\paragraph*{Theory of the Plant vs. Firm} A compelling feature of this extended model is that it provides both a `theory of plant' (sorting into production teams) and  `theory of firm' (a team's decision whether to form one or two firms), and that production equivalent `plants' draw their firm boundaries differently. Furthermore, the `plant'- and firm-formation decisions interact in this model. As outsourcing becomes viable, a high-skill worker who would have previously self-matched, switches to having an outsourced low-skill co-worker. This implies that having the option of cheaply redrawing the boundary of the firm affects what ``plants'' are formed. \looseness=-1

\paragraph*{Submodular Production} The condition $c \in (0, s_F)$ is satisfied only when the production function is submodular. This is potentially problematic,  because supermodular production functions are more commonly assumed in the sorting literature. However, this is largely because the empirical correlation in the level of co-workers skill is large and positive \citep[see Figure 1(b) in][for example]{Freund2022}, a fact which in standard models can be reproduced only with supermodular production. In my model, however, submodular production is perfectly consistent with positive and assortative matching in skills, as long as low-skill workers have stronger relative concerns than high-skill workers (see Section \ref{sec: sorting}). Furthermore, (locally) submodular production has been convincingly microfounded by \cite{Kremer1996} as the by-product of workers' self-selection into roles within the firm, and more recently by \cite{Boerma2021} as the outcome of within-team problem solving. \looseness=-1

\paragraph*{Lower Cost of Outsourcing} An obvious alternative explanation for the trends in outsourcing, sorting and inequality is a decrease in the cost of outsourcing. One can see immediately that this would have the same qualitative impact on outsourcing, sorting and the ratio of between- to within-firm inequality as SBTC in my model. Its impact on wage inequality is ambiguous, however; while the change in sorting it triggered by lower $c$ increases inequality, the lower cost of outsourcing also directly decreases the  inequality between the group of teams that outsource and those that do not.

\paragraph*{Inequity Aversion} Inequity aversion equivalent preferences satisfy the premises of Propositions \ref{prop: outsourcing} and \ref{prop: inequality}(ii)-(iii). Thus, if the cost of outsourcing is neither too high nor too low (so that both outsourcing and non-outsourcing teams co-exist), then the impact of SBTC on outsourcing, sorting and the ratio of between-firm to overall inequality under inequity aversion is much the same as under relative concerns. Under inequity aversion, however, both the sorting and overall effect of SBTC on wage inequality can be negative. This is because, perversely, the difference in high- and low-skill wages may actually be greater in non-outsourcing than in outsourcing firms:  If the high-skill worker in the marginal team is more averse to inequity than their co-worker is, then the high-skill worker has to receive \emph{higher compensation for the inequity they must endure} than their low-skill co-worker. If this is the case, then the increase in outsourcing decreases the number of teams with elevated wage inequality, thus contributing to a decrease in wage inequality.

\paragraph*{Remote Work} Instead of outsourcing, teams could escape detrimental social comparisons by re-organising production. For example, teams could choose to work remotely, which would decrease the intensity of social interactions. Most results from this section can be reinterpreted as results about the prevalence of remote work. The one exception are the results about sorting. When firms avoid social comparisons by re-organising production rather than by outsourcing, then measured and real sorting coincide, and thus SBTC has no impact on measured sorting.

\section{Concluding Remarks}\label{sec: cr}
In this paper, I develop a one-sided assignment model in which workers differ in skill and the strength of their relative concerns. The heterogeneity of relative concerns makes the problem naturally two-dimensional and implies that utility is imperfectly transferable. Yet, I am able to fully characterise the equilibrium for a large class of cases, leveraging the facts that the problem admits a transferable utility representation and that the distribution of traits of `workers' must be the same as that of `co-workers' in equilibrium.

The existence of a transferable utility representation implies that equilibrium sorting optimally trades off output maximisation with the need to maximise the welfare gain stemming from within-team social comparisons. The latter is accomplished by matching high-skill workers to co-workers with weak relative concerns. As a result, equilibrium sorting can be positive (negative) assortative in skill even when production is submodular (supermodular). When production is supermodular, then---in comparison to a model without relative concerns---all workers are better off and wage inequality is lower. With submodular production, however, the presence of heterogenous relative concerns may increase wage inequality and make high-skill workers with low relative concerns and low-skill worker with strong relative concerns worse off.\looseness=-1

Finally, I build on that last insight to argue that skill-biased technological change may have caused the observed long-term increase in domestic outsourcing. Following \cite{Nickerson2008}, I assume that the salience of social comparisons weakens if one of the team-members is outsourced. If that is the case, then teams consisting of high-skill workers with low relative concerns and low-skill worker with strong relative concerns would like to outsource the low-skill worker, even though outsourcing is costly. Skill-biased technological change increases within-team inequality and thus increases the cost of keeping the low-skill worker in-house for such teams; as a result, the number of outsourcing teams increases.

\newpage
\begin{appendix}

\section{Omitted Proofs and Derivations}\label{app: proofs}
\shownow{proof: assumptions}
\shownow{deriv: truthtelling}
\shownow{proof: theo1}
\shownow{proof: propsorting}
\shownow{proof: propvar}
\shownow{proof: proputil}
\shownow{proof: propoutsourcing}
\shownow{proof: propinequality}
\end{appendix}

\bibliographystyle{chicago}
\bibliography{unified}

\end{document}